\newtheorem{theorem}{Theorem}
\newtheorem{corollary}{Corollary}
\title{Methods to Determine Node Centrality and Clustering \\in Graphs with Uncertain Structure}
\author{Joseph J. Pfeiffer, III \\ 
Department of Computer Science \\
Purdue University \\
West Lafayette, IN 47907 \\
\textit{jpfeiffer@purdue.edu} 
\And
Jennifer Neville \\
Departments of Computer Science and Statistics \\
Purdue University \\
West Lafayette, IN 47909\\
\textit{neville@cs.purdue.edu}
    }
\begin{document}
\setlength{\baselineskip}{11pt}

       \maketitle

\begin{abstract}
Much of the past work in network analysis has focused on analyzing discrete graphs, where binary edges represent the ``presence'' or ``absence'' of a relationship. 
Since traditional network measures (e.g., betweenness centrality) utilize a discrete link structure, complex systems must be transformed to this representation in order to investigate network properties.  
However, in many domains there may be \emph{uncertainty} about the relationship structure and any uncertainty information would be lost in translation to a discrete representation. 
Uncertainty may arise in domains where there is moderating link information that cannot be easily observed, i.e., links become inactive over time but may not be dropped
 or observed links may not always corresponds to a valid relationship. 
In order to represent and reason with these types of uncertainty, we move beyond the discrete graph framework and develop social network measures based on a {\em probabilistic} graph representation. More specifically, we develop measures of path length, betweenness centrality, and clustering coefficient---one set based on sampling and one based on probabilistic paths.  We evaluate our methods on three real-world networks from Enron, Facebook, and DBLP,  showing that our proposed methods more accurately capture salient effects without being susceptible to local noise, and that the resulting analysis produces a better understanding of the graph structure and the uncertainty resulting from its change over time.    
\end{abstract}


\section{Introduction}

Much of the past work in network analysis has focused on analyzing discrete graphs, where entities are represented as
nodes and binary edges represent the ``presence'' or ``absence'' of a
relationship between entities. Complex systems of relationships are first transformed to a discrete graph representation (e.g., a friendship graph) and then the connectivity properties of these graphs are used to investigate and understand the characteristics of the system.  
For example, network measures such as the average shortest path length and clustering coefficient have been used to explore the properties of biological and information networks \cite{clusteringcoefficient,Leskovec05graphsover}, while measures such as centrality have been used for determining the most important and/or influential people in social networks \cite{freeman77centrality,Bonacich:87}.  

The main limitation of measures defined for a discrete representation is that they cannot
easily be applied to represent and reason about \emph{uncertainty} in the link structure. Link uncertainty may arise in domains where graphs evolve over time, as links observed at a earlier time may no longer be present or active at the the time of analysis. For example in
online social networks, users articulate ``friendships" with other users
and these links often persist over time, regardless of whether the friendship is maintained. 
This can result in uncertainty about whether an observed friendship link is still \emph{active} at some later point in time.  In addition, there may be uncertainty with respect to the {\em strength} of the articulated relationships~\cite{relstrength},
which can result in uncertainty about whether an observed relationship will be used to transmit information and/or influence.
Furthermore, there are other network domains (e.g., gene/protein networks) where relationships can only be indirectly observed so there is uncertainty about whether an observed edge (e.g., protein interaction) actually indicates the
presence of a valid relationship.

In this work, we formulate a probabilistic graph representation to analyze domains with these types of uncertainty and develop analogues for three standard discrete graph measures---average shortest path length, 
betweenness centrality, and clustering coefficient---in the probabilistic setting.  Specifically, we
use probabilities on graph edges to represent link uncertainty and consider the {\em distribution} of possible (discrete) graphs that they define, then we develop measures that consider the properties of the graph population defined by this distribution.  

Our first set of measures compute {\em expected} values over the distribution of graphs, sampling a set of discrete graphs from this distribution in order to efficiently approximate the path length, centrality, and clustering measures. We then develop a second set of measures that can be directly computed from the probabilities, which removes the need for graph sampling. The second approach also affords us the opportunity to consider more than just shortest paths in the network. We note that previous focus on shortest paths is due in part to an implicit belief that short paths are more likely to result in successful transfer of information and/or influence between two nodes. This has led other  
works to generalize shortest paths to the probabilistic domain for their own purposes 
\cite{sampleprobpaths}. 
However, in a probabilistic framework we can also directly compute the likelihood of a path and consider the {\em most probable} paths, which are likely to facilitate information flow in the network.  

With probabilistic paths, we also introduce a {\em prior} to incorporate the  belief that the probability of successful information transfer is a function of path length---since the existence of a relationship does not necessarily mean that information/influence will be passed across the edge. 
This formulation, which models the likelihood of information spread throughout the graph, is consistent with the finding in
\cite{Onnela2007TieStrength}, which identified that constricting and relaxing
the flow along the edges in the network was necessary to model the true
patterns of information diffusion in an evolving communication graph. 


We evaluate our measures on three real world networks: Enron email, Facebook
micro communications, and DBLP coauthorships. In these datasets, the network transactions are each associated with timestamps (e.g., email date). Thus we are able
to compute the local (node-level) and aggregate (graph-level) measures at multiple time steps, where at each time step $t$ we consider the network information available up to and including $t$. We compare against two different approaches that use the discrete representation: an
\emph{aggregate} approach, which unions all previous transactions (up to $t$) into a discrete graph, and a \emph{slice} approach, where only transactions from a small window (i.e., $[t-\delta, t]$) are included in the discrete representation.  For our methods, we estimate edge probabilities from the transactions observed up to $t$, weighting each transaction with an exponential decay function. Our analysis shows that our proposed methods more accurately capture the
salient changes in graph structure compared to the discrete methods without being
susceptible to local, temporal noise. Thus the resulting analysis produces a
better understanding of the graph structure and its change over time.


\section{Related Work}
The notion of probabilistic
graphs have been studied previously, notably by \cite{frankshortestpaths}, \cite{probtraffic}
and \cite{sampleprobpaths}. \cite{frankshortestpaths} showed how
for graphs with probability distributions over the weights for
each edge, Monte Carlo methods can be used to sample to determine the shortest
path probabilities between the edges. \cite{probtraffic} then extends this to
find the shortest weighted paths most likely to complete within a certain time constraint (e.g., the shortest distance across town in under half an hour). 
In \cite{sampleprobpaths}, the most probable shortest paths are used to estimate the $k$-nearest neighbors in the graph for a particular node.  Although
\cite{sampleprobpaths} draws sample graphs based on {\em likelihood} (i.e., sampling each edge according to its probability),
in their estimate of the shortest path distribution they weight each sample graph based on its probability, which is incorrect unless the samples are drawn uniformly at random from the distribution. In this work, we sample in the same manner as \cite{sampleprobpaths}, but weight each sample uniformly in our expectation calculations---since, when the graphs are drawn from the distribution based on their likelihood, the graphs with higher likelihood are more likely to be sampled.  
 


There has also been some recent work that has developed measures for time-evolving graphs, e.g., to identify
the most central nodes throughout time \cite{tang-tempshortestpaths} and identify the edges that maximize  communication over time \cite{vectorclocks}.  However, these 
works fail to account for the uncertainty in both the link structure and the the communication across links (as users are unlikely to propagate
all information across a single edge).  Our use of a probabilistic graph framework and transmission prior address these two cases of uncertainty.

\section{Sampling Probabilistic Graphs}

Let $G=\left\langle V,E\right\rangle$, be a graph 
where $V$ is a collection of nodes and $E \in V \times V$ is the set of edges, or
relationships, between the nodes. In order to represent and reason about relationship uncertainty, we associate each edge $e_{ij}$ (which connects node $v_{i}$ and $v_{j}$) with a probability $P(e_{ij})$.
Then we can define $\mathcal{G}$ to be a distribution of discrete, unweighted graphs. Assuming independence among edges, the probability of a graph $G \in \mathcal{G}$ is: $P(G) = \prod_{e_{ij} \in E} P(e_{ij}) \prod_{e_{ij} \notin E} \left[1 - P(e_{ij}) \right]$.
Since we have assumed edge independence, we can sample a graph $G_S$ from $\mathcal{G}$ by sampling edges independently according to their probabilities $P(e_{ij})$. Based on this, we can develop methods to compute the \emph{expected} shortest path lengths, betweenness centrality rankings, and clustering coefficients using sampling.



\vspace{-2.mm}
\paragraph{Probabilistic Average Shortest Path Length}
Let $\rho_{ij}=\{v_{k_1}, v_{k_2}, ..., v_{k_q}\}$ refer to a \emph{path} of $q$ vertices connecting two vertices $v_i$ and $v_j$, i.e., $v_{k_1}=v_i$ and $v_{k_q}=v_j$, and from each vertex to the next there exists an edge: $e_{k_i k_{i+1}} \in E$ for $i=[1, q-1]$. Let $V(\rho_{ij})$
and $E(\rho_{ij})$ refer to the set of vertices and edges respectively, in the path and let $|\rho_{ij}| = |E(\rho_{ij})|$ refer to the {\em length} of the path.
Assuming connected graphs, for every unweighted graph $G\!\!=\!\!\left\langle V,E\right\rangle \in \mathcal{G}$ there exists a shortest path $\rho_{ij}^{min}$ between every pair of nodes $v_i, v_j \in V$.  Letting $\mbox{SP}_{ij} = |\rho_{ij}^{min}|$, we can then define the average shortest path length in $G$ as:
$\overline{\mbox{SP}}(G) = \frac{1}{|V|\cdot(|V|-1)}{\sum_{i \in V} \sum_{j \in V; j\neq i} \mbox{SP}_{ij}}$.

Now, when there is uncertainty about the edges in G, we can compute the {\em expected} average shortest path length by considering the distribution of graphs $\mathcal{G}$.
For any reasonable sized graph, the distribution $\mathcal{G}$ will be intractable
to enumerate explicitly, so instead we sample from $\mathcal{G}$ to approximate the expected value. More specifically, we sample a graph $G_s$ by sampling edges uniformly at random according to their edge probabilities $P(e_{ij})$. Each graph that we sample in this manner has equal likelihood, thus we can draw $m$ sample graphs $G_S=\{G_1, ..., G_m\}$ and calculate the expected shortest path length with the following:
\begin{equation}
\mathbb{E}_{\mathcal{G}}\!\left[\: \overline{\mbox{SP}} \: \right] =\sum_{G\in\mathcal{G}}  \overline{\mbox{SP}}(G) \cdot P(G) \simeq\frac{1}{m}\sum_{m}  \overline{\mbox{SP}}(G_m) 
\end{equation}

Since the sampled graphs are unweighted, it takes $O\left(\left|V\right|\left|E\right|\right)$ time to 
compute $\overline{\mbox{SP}}$ for each sample~\cite{Brandes01FasterBC}. This results in an overall cost of $O\left(m\cdot\left|V\right|\left|E\right|\right)$ to compute $\mathbb{E}_{\mathcal{G}}\!\left[\: \overline{\mbox{SP}} \: \right]$.

\vspace{-2.mm}
\paragraph{Sampled Centrality}
Betweenness centrality for a node $v_i$ is defined to be the number of shortest paths between other pairs of nodes which pass through  $v_i$: 
$BC_i = | \{ \rho_{jk}^{min} \in G :  v_i \in V(\rho_{jk}) \;  \wedge \; i \neq j,k\} |$.
Vertices that contribute to the existence of many shortest paths will have a higher BC score than 
other nodes that contribute to fewer shortest paths, thus BC is used a measure of importance or centrality in the network. It is difficult to directly compare BC values across graphs since the number of shortest paths varies with graph size and connectivity. Thus, typically analysis focuses on {\em betweenness centrality rankings} (BCR), where the nodes are ranked in descending order of their BC scores and the node with the highest BC score is given a BCR of 1. 

As discussed above, we can compute the shortest paths for each unweighted graph $G\in\mathcal{G}$, then we can also compute the BCR values for each unweighted graph $G\in\mathcal{G}$. We denote $\mbox{BCR}_{i}(G)$ as the betweenness centrality ranking for node $v_{i}$ in $G$. Then we can approximate the expected BCR for each node by sampling a set of $m$ graphs from $\mathcal{G}$:
\begin{equation}
\mathbb{E}_{\mathcal{G}}\!\left[ \mbox{BCR}_i \right] \simeq\frac{1}{m}\sum_{m}  \mbox{BCR}_i(G_m) 
\end{equation}
Again, since the sampled graphs are unweighted, it takes $O\left(\left|V\right|\left|E\right|\right)$ time to 
compute the BCR for each sample~\cite{Brandes01FasterBC}, resulting in an overall cost of 
$O\left(m\cdot\left|V\right|\left|E\right|\right)$.

\vspace{-2.mm}
\paragraph{Sampled Clustering Coefficients}
Clustering coefficient is a measure of how the nodes in a graph cluster together~\cite{clusteringcoefficient}. 
For a node $v_i$ with $N_i\!=\!\{v_{j_1}, ..., v_{j_n}\}$ neighbors (e.g., $e_{i j_1}\!\!\! \in \!\!\! E$), its clustering coefficient is defined as $\mbox{CC}_i \!=\! \frac{1}{|N_i|(|N_i|-1)}\sum_{v_j \in N_i} \sum_{v_k \in N_i, k\neq j} \mathbb{I}_E(e_{jk})$, where $\mathbb{I}_E$ is an indicator function which returns 1 if $v_j$ is connected to $v_k$.  CC can be thought of as the fraction connected pairs of neighbors of $v_i$.
We denote $\mbox{CC}_{i}(G)$ as the clustering coefficient for node $v_{i}$ in graph $G$. Similar to paths, we can compute clustering coefficients for every graph $G \in \mathcal{G}$.  Thus we can approximate the expected CC for each node by sampling a set of $m$ graphs from $\mathcal{G}$:
\begin{equation}
\mathbb{E}_{\mathcal{G}}\!\left[ \mbox{CC}_i \right] \simeq\frac{1}{m}\sum_{m}  \mbox{CC}_i(G_m) 
\end{equation}
Under the assumption that the maximum degree in the graph can be bounded by a fixed constant (which is typical for sparse social networks), we can compute the clustering coefficient
for a single graph in $O(\left|V\right|)$ time (i.e., $O(1)$ for each node), which results in an overall cost of $O(m\cdot \left|V\right|)$.

\section{Probabilistic Path Length}
In the previous section, we discussed how to extend the discrete notions of shortest paths and centrality into a probabilistic graph framework via expected values, and we showed how to estimate approximate values using sampling.  While our sampling-based measures are valid and give informative results (see section 6 for details), they have two limitations which restrict their applicability.  

First, the effectiveness of the approximation depends on the number of samples from $\mathcal{G}$. We note that \cite{sampleprobpaths} used a Hoeffding Inequality to show that relatively few samples are needed to compute an accurate estimate of independent shortest
paths in probabilistic graphs.  However, since our the calculation of BCR is based on the joint occurrence of shortest paths in the graph, this bound will not hold for our measures.   

Second, since the expectation is over possible worlds (i.e., $G \in \mathcal{G}$), the focus on shortest paths may no longer be the best way to capture node {\em importance}. We note that in the discrete framework, where all edges are equally likely, the use of shortest paths as a proxy for importance implies a prior belief that shorter paths are more likely to be used successfully to transfer information and/or influence in the network. 
In domains with link uncertainty, the flow of information/influence will depend on both the {\em existence} of paths in the network and the {\em use} of those paths for communication/transmission. In a probabilistic framework, we have an opportunity to explicitly incorporate the latter, by encoding our prior beliefs about transmission likelihood into measures of node importance. Furthermore, although a probabilistic representation enables analysis of more than just shortest paths, as we note above, even to capture shortest paths the sampling methods described previously may need many samples to accurately estimate the joint existence of shortest paths. Thus, a measure that explicitly uses the edge probabilities to calculate most \emph{probable} paths may more accurately highlight nodes that serve to connect many parts of the network.  We discuss each of these issues more below.

\vspace{-4.mm}
\paragraph{Most Probable Paths}
To begin, we extend the notion of discrete paths to probabilistic paths in our framework. Specifically, we can calculate the probability of the existence of a path $\rho_{ij}$ as follows (again assuming edge independence):
$P(\rho_{ij})=\prod_{e_{uv}\in E(\rho_{ij})}P(e_{uv})$.
Using the path probabilities, we can now describe
the notion of the {\em most probable} path. Given two nodes $v_i,v_j$, the most probable path
path is simply the one with {\em maximum likelihood}: $ \rho_{ij}^{ML} = \arg\!\max P(\rho_{ij})$. 
We can compute the most likely paths 
in much the same way that shortest paths are computed on weighted discrete graphs, by applying Dijkstra's shortest path algorithm, but instead of expanding on the shortest path, we expand the most probable path. 
Thus, all most probable paths can be calculated in
$O\left(\left|V\right|\left|E\right|+\left|V\right|^{2}\log\left|V\right|\right)$.

\vspace{-4.mm}
\paragraph*{Transmission Prior} Previous focus on shortest paths for assessing centrality points to an implicit assumption that if an edge connects two nodes that it can be successfully used for transmission of information and/or influence in the network. Although there has been work both in maximizing the spread of information in a network through the use of central nodes \cite{boragatti-netflow,betweennesscentralityrandomwalks} and in the study of information propagation through the use of transmission probabilities \cite{transmissionprob}, there has been little prior work that has incorporated transmission probabilities into node centrality measures. 
Centrality measures based on random walks and eigenvectors~\cite{betweennesscentralityrandomwalks} implicitly penalize longer paths as they consider {\em all} paths between nodes in the network. 
However, in our framework we can incorporate transmission probabilities to penalize the probabilities of longer paths in the graph, in order to more accurately capture the role nodes play in the spread of information across multiple paths in the network.

Consider the case where there is one path of nine people where each edge has high probability of existence (e.g., 0.95) and another path of three people where the edge probabilities are all moderate (e.g., 0.70), both ending at node $v$.  Here, the longer path is more likely to exist than the shorter path, but in this example we are more interested in which path is used to transfer a virus to $v$.  
Even when an edge exists (i.e., the relationship is active), the virus will not be passed with certainty to the next node, thus the \emph{transmission probability} is independent of the edge probability. Moreover, when the transmission probability is less than 1, it is more likely that the virus will be transmitted across the shorter path, since the longer path presents more opportunities for the virus to be dropped.
This provides additional insight as to why shortest paths have always been considered important---there is generally a higher likelihood of transmission if it is passed through fewer nodes in the network.

To incorporate transmission likelihood into our probabilistic paths, we assign a probability $\beta$ of success for every
step in a particular path---corresponding to the probability that information
is transmitted across an edge and is received by the neighboring node. If we
denote $l$ to be the length of a path $\rho$, and $s$ to be the number of
successful transmissions along the path, we can use a binomial distribution to represent  the transmission
probability across $\rho$ with:
\[
\mbox{SBin}(s|\beta)=\mbox{Bin}(s=l|l,\beta)= \beta^{l}\]
Here SBin corresponds to the case where the transmission \emph{always} succeeds (i.e., across all edges in $\rho$).
Using this binomial distribution as a prior allows us to represent the expected probability of
information spread in an intuitive manner, giving us a parameter $\beta$
which we can adjust to fit our expectations for the information spread
in the graph. Note that setting $\beta=1$ is equivalent to the most probable paths discussed earlier.
The prior effectively {\em handicaps} longer paths through the graph. Although, there is a correlation between shortest (certain)
paths and handicapped (uncertain) paths, these formulations are \emph{not} equivalent, since the latter produces a different set of paths when the shortest paths have low probability of existence. 

\vspace{-2.mm}
\paragraph*{ML Handicapped Paths} Now that we have both the notion of a probabilistic path, and an
appropriate prior for modeling the probability of information spreading
along the edges in the path, we can formulate the \emph{maximum likelihood
handicapped path} between two nodes $v_{i}$ and $v_{j}$ to be:
\begin{equation}
\rho_{ij}^{MLH}=\arg\!\max_{\rho_{ij}}\left[P(\rho_{ij})\cdot\mbox{SBin}(\:|\rho_{ij}|\: |\:\beta )\right]\label{eq:ml}\end{equation}
To compute the most likely handicapped (MLH) paths, we follow the
same formulation as the most probable paths, keeping track of the path length and posterior at each point.
In the MLH formulation,  probable paths are weighted by likelihood of transmission, thus nodes that lie on paths that are highly likely and relatively short, will have a high BC ranking.
To calculate BCR ranking based on MLH paths, we can use a weighted betweenness centrality algorithm. 
Specifically, we modify Brandes' algorithm~\cite{Brandes01FasterBC}
to start with the path that has the lowest probability of occurrence
to be the one to backtrack from, enabling computation of the betweenness
centrality in $O\left(\left|V\right|\left|E\right|+\left|V\right|^{2}\log\left|V\right|\right)$.%

\subsection{\label{sec:Theory}Comparison with Discrete Graphs}

The formulation of MLH Paths has
inherent benefits, most notably with its direct connection to the
previously well-studied notions of shortest paths and betweenness
centrality in discrete graphs. In fact, we can view a discrete graph $G$ as being a special
case of probabilistic graph with edge probabilities:
\vspace{-4.mm}
\begin{equation}
P(e_{ij})=\begin{cases}
1 & \mbox{if an edge exists}\\
0 & \mbox{if the edge does not exist}\end{cases}\label{eq:probstatic}\end{equation}
We denote the distribution of graphs defined by these probabilities as $\mathcal{G}_1$. Note that the only graph in $\mathcal{G}_1$ with non-zero probability is $G$---since if an edge exists in a discrete graph,
then it exists with complete certainty, likewise, if
an edge is not present, we are certain it does not exist, thus $P(G)=1$.

\begin{theorem}
For every pair of nodes $v_i$ and $v_j$, the shortest path in the discrete graph ($\rho_{ij} \in G$) is equal to the most probable path discovered by the MLH algorithm ($\rho_{ij}^{MLH} \in \mathcal{G}_1$), for $0<\beta<1$.
\end{theorem}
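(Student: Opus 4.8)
The plan is to reduce the MLH objective to a monotone function of path length in the special case $\mathcal{G}_1$, and then invoke strict monotonicity to identify its maximizer with the shortest path. First I would observe that in $\mathcal{G}_1$ every edge probability is either $0$ or $1$, taking the value $1$ exactly on the edges in $E$. By the definition of a path, any $\rho_{ij}$ traverses only edges in $E$; hence every edge along such a path has probability $1$, and the path probability factorizes as $P(\rho_{ij}) = \prod_{e_{uv}\in E(\rho_{ij})} 1 = 1$. Consequently the MLH score in Eq.~(\ref{eq:ml}) collapses to $P(\rho_{ij})\cdot\mbox{SBin}(\,|\rho_{ij}|\,|\,\beta) = 1\cdot\beta^{|\rho_{ij}|} = \beta^{|\rho_{ij}|}$, which depends on the path only through its length.

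The second step is the monotonicity argument. Since $0<\beta<1$, the map $l \mapsto \beta^{l}$ is strictly decreasing, so $\arg\!\max_{\rho_{ij}} \beta^{|\rho_{ij}|}$ is attained exactly at the paths of minimum length, i.e., the shortest paths in $G$. This immediately yields the claimed equality between the MLH path and the discrete shortest path. I would also note the tie case: when several shortest paths share the minimal length $l_{\min}$, they all share the maximal score $\beta^{l_{\min}}$, so the set of MLH-optimal paths coincides exactly with the set of discrete shortest paths rather than singling out one arbitrarily.

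The one point requiring care---and really the only subtlety in an otherwise short argument---is justifying why we may restrict attention to valid paths and why \emph{both} bounds on $\beta$ are needed. Here the two-sided constraint does the work: $\beta>0$ guarantees that every valid path has strictly positive score $\beta^{|\rho_{ij}|}>0$, so it dominates any candidate that would attempt to traverse a missing (probability-$0$) edge and thereby receive score $0$; this rules out spurious maximizers and ensures the algorithm returns a genuine path of $G$, which exists by the connectedness assumption. Meanwhile $\beta<1$ is precisely what makes $\beta^{l}$ strictly decreasing, so that ties in probability are resolved in favor of the shorter path. I would close by remarking that at $\beta=1$ the objective becomes constant across all valid paths---consistent with the earlier observation that $\beta=1$ recovers the plain most-probable path rather than the shortest one---which is exactly why the hypothesis excludes that endpoint.
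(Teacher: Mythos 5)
Your proof is correct and takes essentially the same route as the paper's: in $\mathcal{G}_1$ every valid path has probability $1$, so the MLH objective in Eq.~(4) collapses to $\beta^{|\rho_{ij}|}$, which for $0<\beta<1$ is strictly decreasing in length and hence maximized by the discrete shortest path, while any shorter candidate would traverse a probability-$0$ edge and receive score $0$. Your added remarks on tie-handling and on why each bound on $\beta$ is needed are sound refinements of the paper's argument rather than a different approach.
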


\begin{proof}
In $\mathcal{G}_1$ every $P(e_{ij})$
is either $1$ or $0$, thus every case where $P(\rho_{ij})>0$ is precisely
$P(\rho_{ij})=1$.  If we choose the shortest path from the discrete
graph, it will have length $l^* = |\rho_{ij}|$, and the MLH probability for the same
path will be $\beta^{l^*}$.  Clearly, if a longer path were chosen by MLH,
its probability would be less than $\beta^{l^*}$, and we know that no shorter paths exist---since all paths shorter than $\rho_{ij}$ would involve an edge than did not exist in $G$ and thus would have probability 0 .
\end{proof}
%


\vspace{-3.mm}
\begin{corollary}
The betweenness centrality using
shortest paths on a discrete graph $G$ can be equivalently calculated with most probable handicapped paths over $\mathcal{G}_1$, where edge probabilities
are defined by Equation \ref{eq:probstatic}.
\end{corollary}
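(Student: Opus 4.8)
The plan is to obtain the corollary directly from the preceding Theorem, using the observation that betweenness centrality is merely a count over the collection of optimal paths between pairs of nodes. First I would recall the definition: $BC_i$ counts the shortest paths $\rho_{jk}^{min}$ (ranging over pairs $v_j,v_k$ with $i\neq j,k$) that pass through $v_i$, and the ranking $\mbox{BCR}_i$ is a deterministic function of the vector of $BC$ values. Hence it suffices to prove that, for every ordered pair $(v_j,v_k)$, the family of paths the discrete measure aggregates over—the shortest paths in $G$—coincides exactly with the family of paths the MLH measure aggregates over—the MLH-optimal paths in $\mathcal{G}_1$. Once the two path families are identified pairwise, the counts defining $BC_i$ agree term by term, and therefore so do the induced rankings.

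The core of the identification is essentially contained in the Theorem. In $\mathcal{G}_1$ every path built from existing edges has $P(\rho_{jk})=1$, while any path using a non-existent edge has $P(\rho_{jk})=0$; consequently the MLH objective equals $P(\rho_{jk})\cdot\mbox{SBin}(|\rho_{jk}|\,|\,\beta)=\beta^{|\rho_{jk}|}$ on valid paths and vanishes otherwise. Since $0<\beta<1$, the map $l\mapsto\beta^{l}$ is strictly decreasing, so maximizing the MLH objective over valid paths is equivalent to minimizing path length. I would therefore conclude that the set of MLH-maximizing paths is precisely the set of shortest paths in $G$, which is exactly the set-level strengthening of the Theorem that this corollary needs.

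The one point requiring care—and the step I expect to be the main obstacle—is the treatment of \emph{ties}: betweenness centrality sums over \emph{all} optimal paths between a pair, not merely a single representative, so the correspondence must hold set-wise rather than for one witness path. This is exactly where strict monotonicity of $\beta^{l}$ does the work: two valid paths share the same MLH value if and only if they have equal length, so the collection of MLH-maximizers and the collection of length-minimizers are literally the same set, ties included. This makes the argument robust to whichever form of centrality is used, since both the simple count in the paper's definition and Brandes' fractional normalization $\sigma_{jk}(i)/\sigma_{jk}$ are functions of this common path family. Finally I would note that the excluded endpoints are genuinely necessary: at $\beta=1$ all valid paths tie regardless of length (matching the earlier remark that $\beta=1$ recovers most probable paths, which are undetermined in $\mathcal{G}_1$), and at $\beta=0$ every objective value collapses to $0$; the hypothesis $0<\beta<1$, inherited from the Theorem, rules out both degeneracies and secures the equivalence.
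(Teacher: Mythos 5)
Your proposal is correct, and at bottom it takes the same route as the paper: the paper's entire proof of this corollary is the single sentence ``This follows directly from Thm 1.'' What you add, and what is worth remarking on, is that you make explicit a strengthening the paper silently assumes. Theorem 1 as stated identifies, for each pair $(v_i, v_j)$, \emph{the} shortest path with \emph{the} MLH path---a statement about one witness per pair---whereas betweenness centrality, by the paper's own definition $BC_i = | \{ \rho_{jk}^{min} \in G : v_i \in V(\rho_{jk}) \wedge i \neq j,k\} |$, aggregates over \emph{all} optimal paths between each pair. Your observation that strict monotonicity of $l \mapsto \beta^l$ for $0 < \beta < 1$ forces the \emph{sets} of MLH-maximizers and length-minimizers to coincide (two valid paths in $\mathcal{G}_1$ tie in MLH value if and only if they have equal length) is precisely the set-wise version of the theorem that the corollary actually requires, and the paper's one-line proof only goes through once this is noted. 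Your remarks on the degenerate endpoints are likewise sound: at $\beta = 1$ all existing paths in $\mathcal{G}_1$ receive probability $1$ and the maximizer is undetermined, and at $\beta = 0$ the objective vanishes identically, so the hypothesis $0 < \beta < 1$ inherited from the theorem is doing real work. In short, you reconstruct the paper's intended argument and additionally close the tie-multiplicity gap it glosses over; nothing in your write-up would fail, and the extra care makes the corollary robust to either the counting or the Brandes-style fractional form of betweenness.
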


\begin{proof}
This follows directly from Thm 1. 
\end{proof}


\section{Probabilistic Clustering Coefficient}

We now outline a probabilistic measure of clustering coefficient that can be computed without the need for sampling.  
If we assume independence between edges, the probability of a 
triangle's existence is equal to the product of the probabilities of the three sides. The expected number of triangles is then 
the sum of the triangles probabilities that include a given node $v_{i}$. Denoting $\mbox{Tr}_{i}$ to 
be the expected triangles including $v_{i}$:
$\mathbb{E}_{\mathcal{G}} \left[\mbox{Tr}_{i} \right]=\!\!\sum_{v_{j},v_{k}\in N_i,v_{j}\neq v_{k}}\!\! \left[P\left(e_{ij}\right)\cdot P\left(e_{ki}\right)\cdot P\left(e_{jk}\right) \right]$.
Denoting $\mbox{Co}_{i}$ to be the expected
\emph{combinations} (i.e., coexisting pairs) of the neighbors of $v_{i}$, we then get:
$\mathbb{E}_{\mathcal{G}}\left[\mbox{Co}_{i}\right]=\!\!\sum_{v_{j},v_{k}\in N_i,v_{j}\neq v_{k}}\!\!\left[P\left(e_{ij}\right)\cdot P\left(e_{ki}\right)\right]$.
We can then define the probabilistic clustering coefficient to be the expectation of the ratio $\mbox{Tr}_{i}/\mbox{Co}_{i}$, and approximate it via a first order Taylor expansion~\cite{taylorcite}:

\begin{equation}
\mbox{CC}_{i} =\mathbb{E}_{\mathcal{G}}\left[\frac{\mbox{Tr}_{i}}{\mbox{Co}_{i}}\right] \approx \frac{{\displaystyle \mathbb{E}_{\mathcal{G}}\left[\mbox{Tr}_{i}\right]}}{{\displaystyle \mathbb{E}_{\mathcal{G}}\left[\mbox{Co}_{i}\right]}}
\end{equation}


Assuming again that the maximum degree in the graph can be bounded by a fixed constant, we can compute the probabilistic clustering coefficient in $O(\left|V\right|)$ time ($O(1)$ for each node). Additionally, the probabilistic
approximation to the clustering coefficient shares connections with the traditional clustering coefficients on discrete
graphs.



\begin{theorem}
The probabilistic clustering coefficients computed in $\mathcal{G}_1$, with probabilities defined by \ref{eq:probstatic} for a discrete graph $G$, are equal to the discrete clustering coefficients calculated on $G$.
\end{theorem}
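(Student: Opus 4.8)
The plan is to exploit the fact that in $\mathcal{G}_1$ the entire distribution collapses onto the single graph $G$, so that $\mbox{Tr}_i$ and $\mbox{Co}_i$ are deterministic quantities rather than genuine random variables. I would open by recording this observation, because it makes the first-order Taylor approximation used to define $\mbox{CC}_i$ exact: when $\mbox{Co}_i$ takes a single value with probability one, $\mathbb{E}_{\mathcal{G}_1}\!\left[\mbox{Tr}_i/\mbox{Co}_i\right] = \mathbb{E}_{\mathcal{G}_1}\!\left[\mbox{Tr}_i\right]/\mathbb{E}_{\mathcal{G}_1}\!\left[\mbox{Co}_i\right]$ holds with equality, so there is no gap between the defined expectation and the quantity actually computed.

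Next I would evaluate the two expectations term by term using Equation \ref{eq:probstatic}. For $\mathbb{E}_{\mathcal{G}_1}\!\left[\mbox{Co}_i\right]$, each summand is $P(e_{ij}) \cdot P(e_{ki})$ with $v_j, v_k \in N_i$; since every neighbor of $v_i$ is joined to $v_i$ with probability exactly $1$ in $\mathcal{G}_1$, each product equals $1$, and summing over the ordered distinct pairs yields $|N_i|(|N_i|-1)$. This is precisely the normalizing denominator of the discrete clustering coefficient.

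For $\mathbb{E}_{\mathcal{G}_1}\!\left[\mbox{Tr}_i\right]$ I would again substitute $P(e_{ij}) = P(e_{ki}) = 1$, so that each summand collapses to $P(e_{jk})$, which under Equation \ref{eq:probstatic} is exactly the indicator $\mathbb{I}_E(e_{jk})$. Hence $\mathbb{E}_{\mathcal{G}_1}\!\left[\mbox{Tr}_i\right] = \sum_{v_j, v_k \in N_i, v_j \neq v_k} \mathbb{I}_E(e_{jk})$, matching the numerator of the discrete definition. Forming the ratio of the two expressions then reproduces the discrete $\mbox{CC}_i$ verbatim.

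The step requiring the most care is reconciling the conventions in the two formulas rather than any hard computation. I would verify that the neighborhood $N_i$ appearing in the probabilistic sums coincides with the discrete neighborhood, which holds because any node with $P(e_{ij}) = 0$ contributes a vanishing term regardless of whether it is nominally counted in $N_i$; and I would confirm that both formulas range over the same set of \emph{ordered} distinct pairs, so that the factor of two from double counting cancels identically in numerator and denominator. I would also flag the degenerate case $|N_i| \le 1$, where both coefficients have a zero denominator and are undefined, making the claimed equality vacuous there.
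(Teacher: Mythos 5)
Your proof is correct and follows essentially the same route as the paper's: substituting the $0/1$ probabilities of Equation \ref{eq:probstatic} makes each triangle term and each neighbor-pair term collapse to its discrete indicator, so the numerator and denominator sums coincide with those of the discrete clustering coefficient. Your additional observations---that the first-order Taylor approximation is exact on a point-mass distribution, and that the case $|N_i| \le 1$ is degenerate for both definitions---are refinements the paper's terser proof leaves implicit, but the core argument is the same.
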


\begin{proof}
Any triangle from $G$ has probability 1 in $\mathcal{G}_1$, 
while any non-triangle
in $G$ clearly has probability 0. The same is true for the combinations of pairs of neighbors. As such, the sums of the
numerators and denominators will be equal for both clustering coefficient. 
\end{proof}

\section{Experiments}
 \begin{figure*}[t]
\begin{centering}
\vspace{-4.mm}
\subfloat[]{\begin{centering}
\includegraphics[width=0.18\textwidth,height=2.6cm]{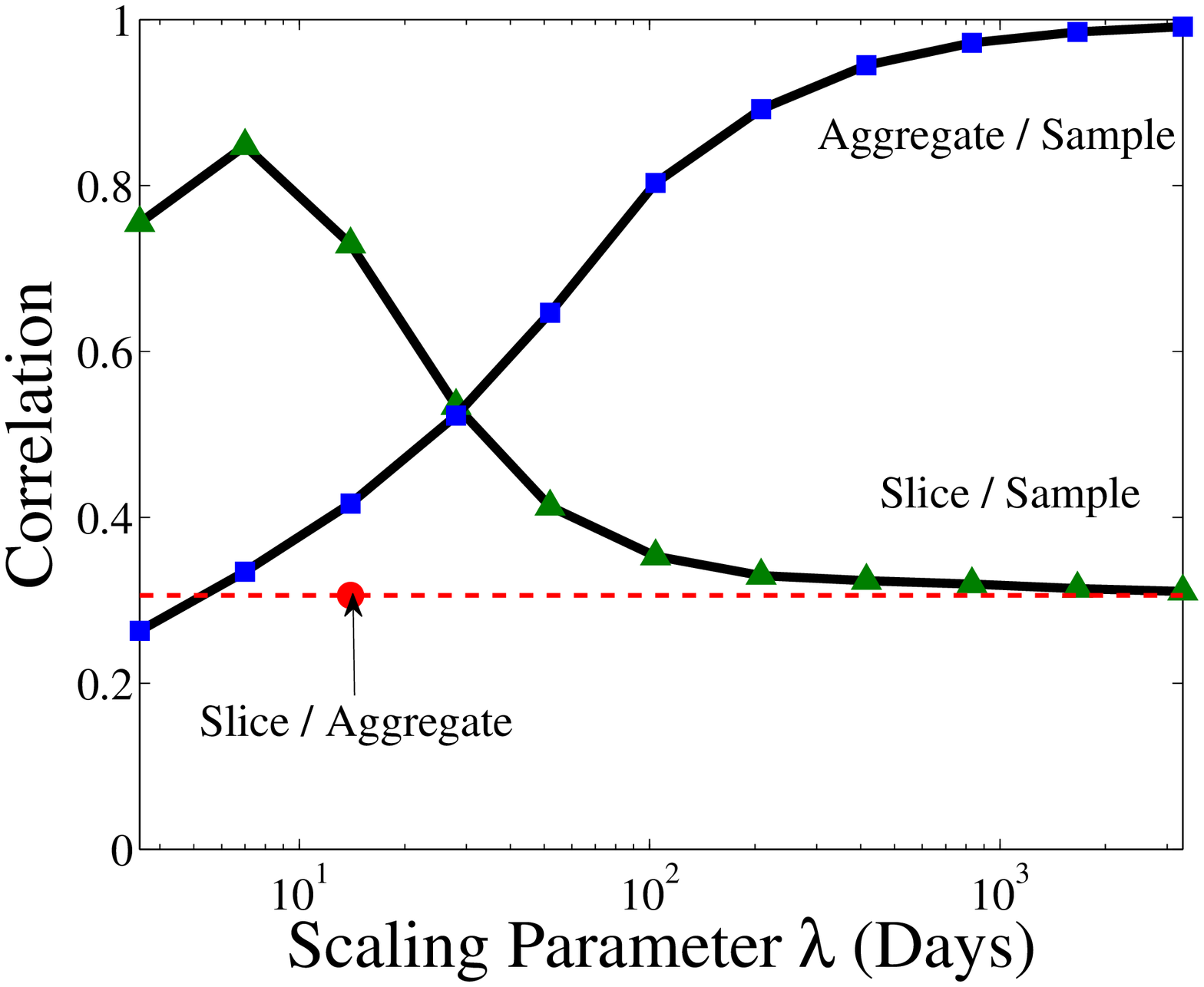}
\par\end{centering}}
\subfloat[]{\begin{centering}
\includegraphics[width=0.18\textwidth,height=2.6cm]{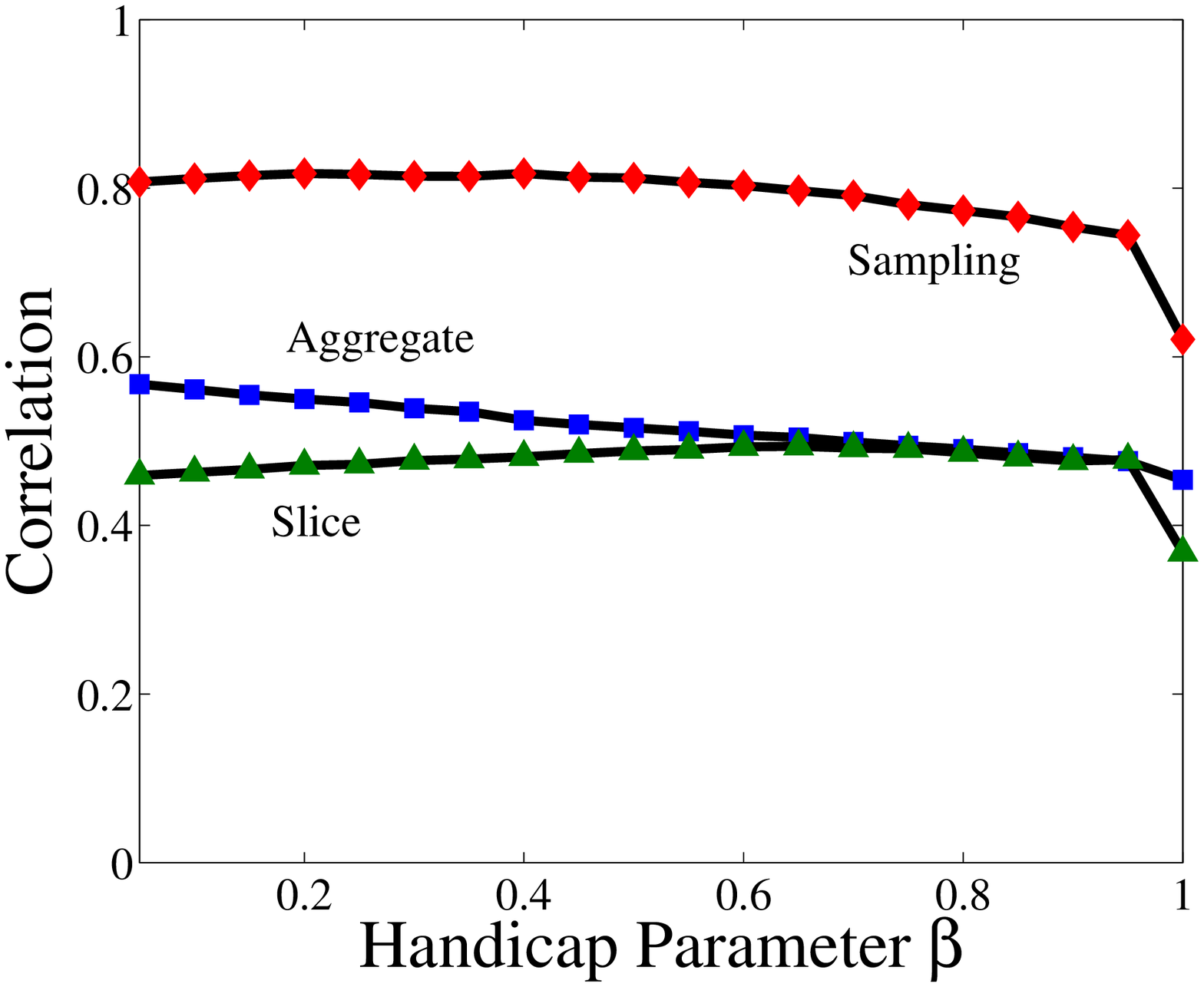}
\par\end{centering}}
\subfloat[]{\begin{centering}
\includegraphics[width=0.18\textwidth,height=2.6cm]{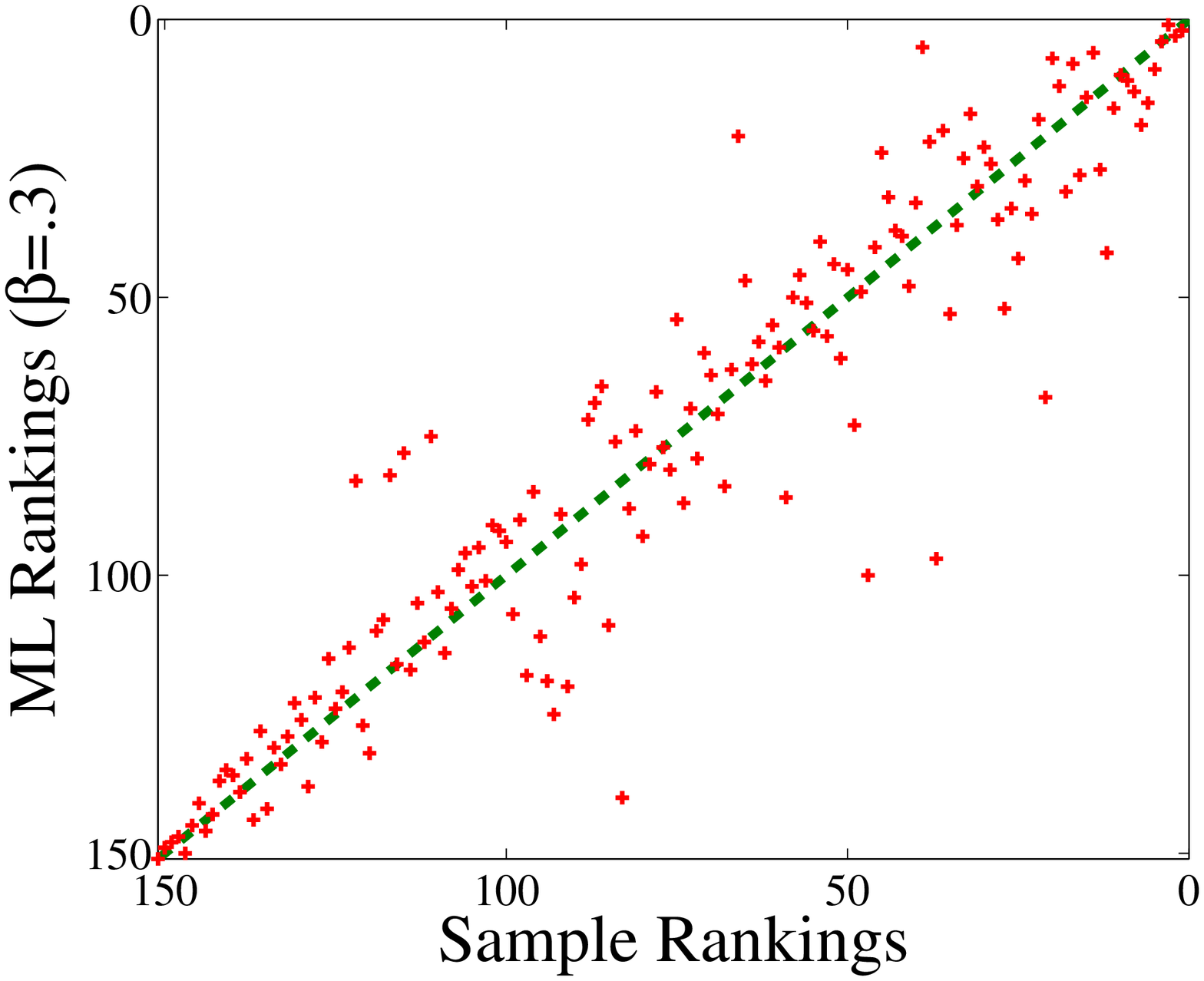}
\par\end{centering}}
\subfloat[]{\begin{centering}
\includegraphics[width=0.18\textwidth,height=2.6cm]{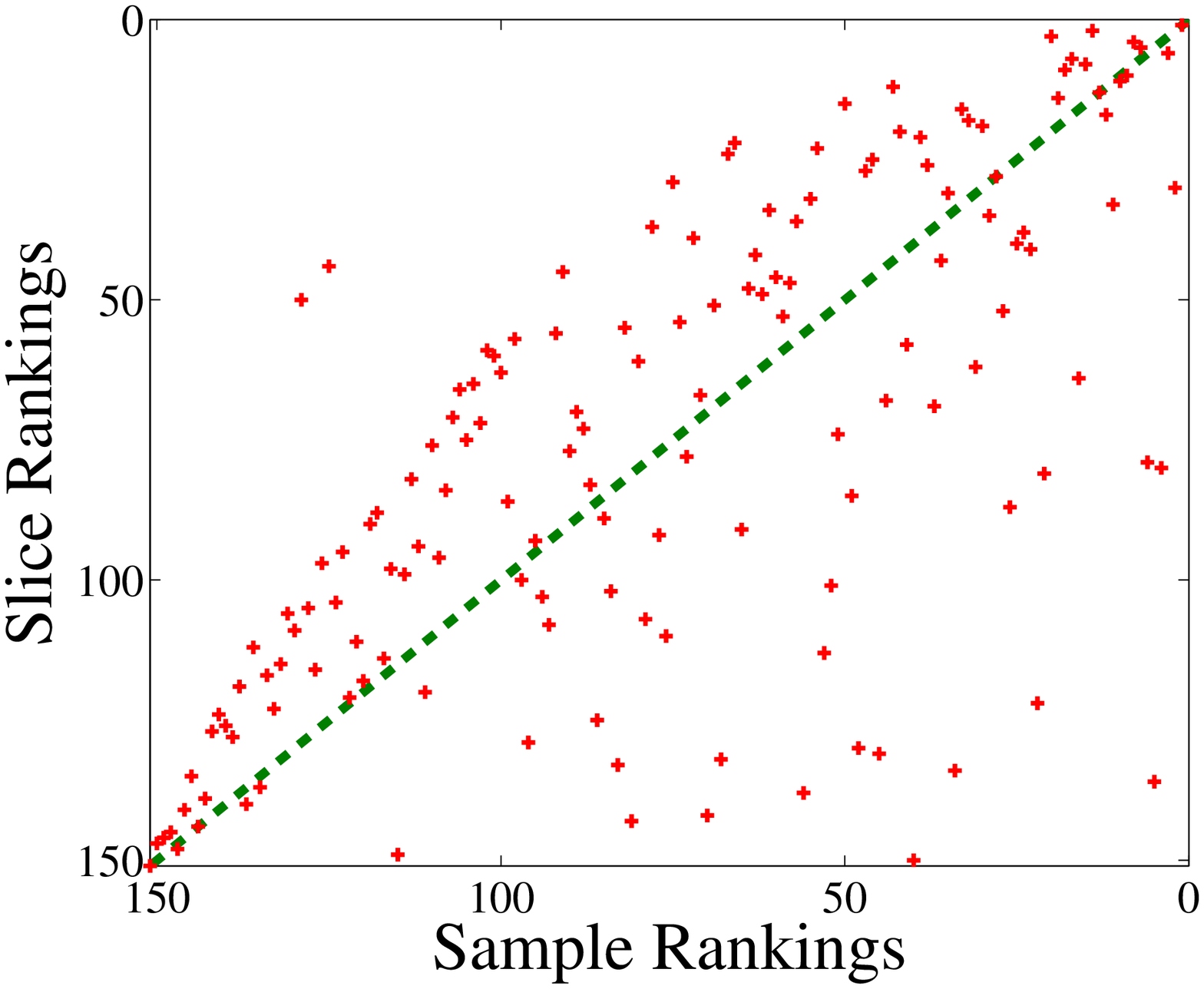}
\par\end{centering}}
\subfloat[]{\begin{centering}
\includegraphics[width=0.18\textwidth,height=2.6cm]{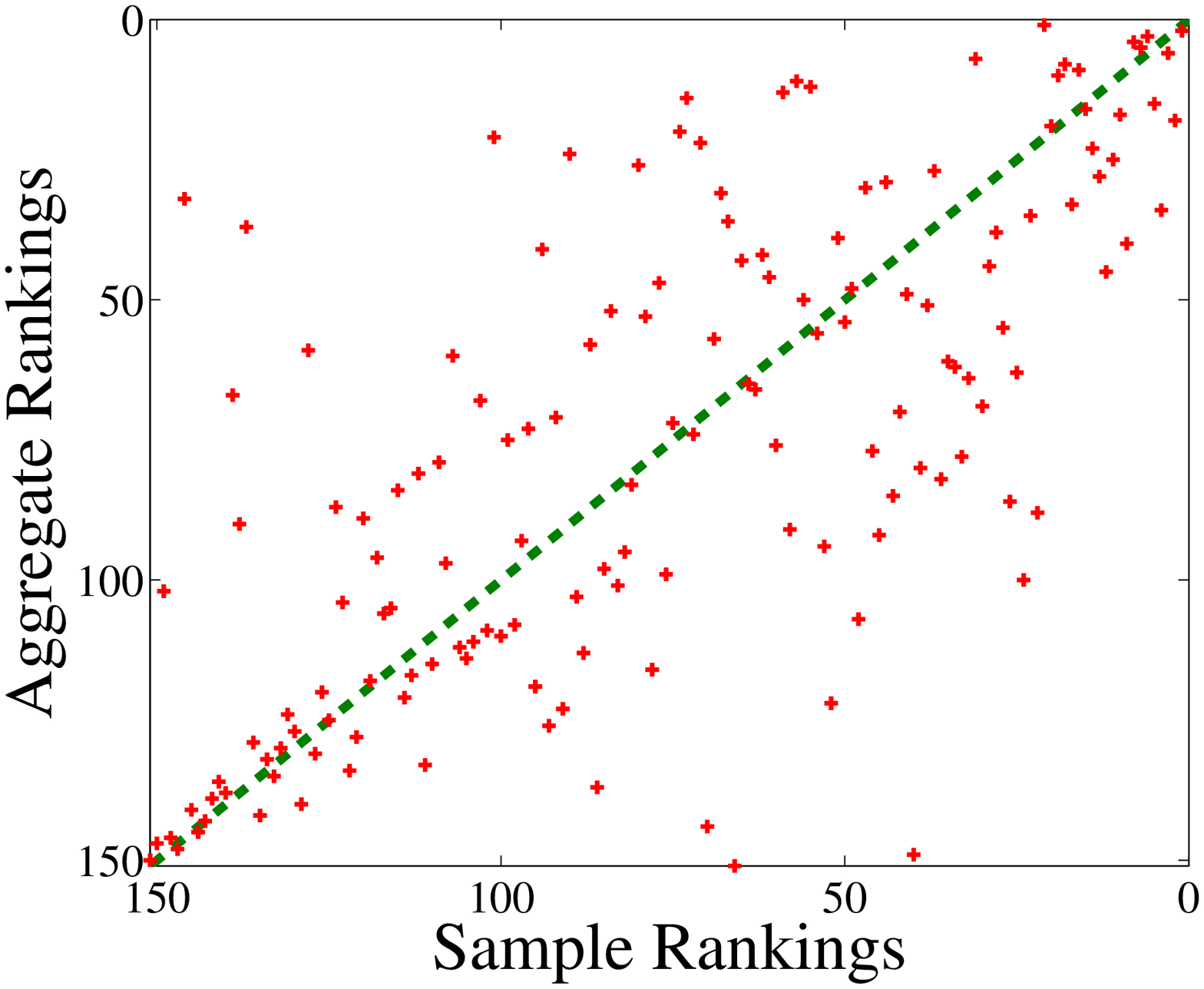}
\par\end{centering}}

\par\end{centering}
\vspace{-2.mm}

\caption{(a) Correlation between methods
for varying values of $\lambda$\label{fig:lambda-setting}. (b) Correlation of MLH with other methods as $\beta$ is varied.  (c-e) Correlations of Enron employee BCRs across methods, for the time segment ending August $24^{th}$, $2001$ \label{fig:aug-24}}
\vspace{-4.mm}

\end{figure*}

To investigate the performance of our proposed MLH and sampling methods for average path length, betweenness centrality and clustering coefficient, we compare to traditional baseline social network measures on data from Enron, DBLP, and Facebook. These datasets all consist of time-stamped {\em transactions} among people (e.g., email, joint authorship). We will use the temporal activity information to derive probabilities for use in our methods, and evaluate our measures at multiple time steps to show the evolution of measures in the three datasets.

\subsection{Datasets}

For our analysis we first use the Enron dataset \cite{enroncleaned}. The advantage to this dataset is that it allows us to understand the effects of our probabilistic
measures because key events and central
people have been well documented \cite{EnronTimeline}. 
We consider the subset of the data comprised of the emails
sent between employees,
resulting in a dataset with 50,572 emails among 151 employees. 

Our second dataset is a sample from the DBLP computer science citation database.  We considered the set of authors who had published more than 75 papers in the timeframe 1967-2006, and the coauthor relationships between them.  The resulting subset of data consisted of 1,384 nodes, with 23,748 co-authors relationships.  

Our third dataset is from the Purdue University Facebook network. 
Specifically we consider one year's worth of wall-to-wall postings between users in the class of 2011 subnetwork.  The sample has 2,648 nodes with 59,565 messages.

\subsection{Methodology}

We compare four network measures for each timestep $t$ in each dataset. When evaluating at time $t$, each method is able to utilize the graph edges that have occurred up to and including $t$. 
As baselines, we compare to (1) an \emph{aggregate} method, which
at a particular time $t$ computes standard measures for discrete graphs (e.g., BCR) on the union of edges that have occurred up to and including $t$, and (2) a time \emph{slice} method, which again computes the standard measures, but
only considers the set of edges that occur within the time window $[t-\delta, t]$. 
For the Enron and Facebook, we used $\delta=14$ days and for DBLP, we considered $\delta=1$  year. 

We then compare to the sampling and MLH measures. For both the probabilistic methods, we need a measure of relationship strength to use as probabilities in our model.
Although any notion of relationship strength can be substituted at this
step, in this work we utilize a measure of relationship strength based on
decayed message counts. 
More specifically, we define two separate
and distinct notions of connection between nodes: \emph{edges} and \emph{messages}. We define
an edge $e_{ij}$ to be the unobservable probabilistic connection between two nodes,
indicating whether the nodes have an active relationship. This is in contrast to messages: a message $m_{ij}$ is a
concrete and directly measurable communication between two nodes $v_{i}$
and $v_{j}$, such as a wall posting or email, occurring at a specific time, which we denote $t( m_{ij} )$.
We define the probability of of nodes $v_{i}$ and $v_{j}$ having
an \emph{active} relationship at the current timestep $t_{now}$, based on observing a
message at time $t(m_{ij})$, to be the exponential decay of a
particular message:
\vspace{-2.mm}
{\small
\[P\left(e_{ij}^{t}|m_{ij}\right)
=
\mbox{Exp}\left(m_{ij}|t_{now},\lambda\right)=\exp\left\{- \frac{1}{\lambda}\left(t_{now}-t\left(m_{ij}\right)\right)\right\} \] }
Note that the \emph{scaling} parameter $\lambda$ refers to the
adjustment of the basic time unit (e.g. 7 days to 1 week), not the
\emph{rate} parameter which defines the exponential probability density function, which in
this case is $1$.  This allows for assigning a
probability of $1$ to the case when $t\left(m_{ij}\right) = t_{now}$, but it also
assigns reasonable probabilities (i.e., slows the decay) for messages that
happened in the recent past, which could still indicate active relationships.

Now, we assume we have $k$ messages between $v_i$ and $v_j$, and any
of the messages $m_{ij}^1, \dots, m_{ij}^k$ can contribute to the
relationship strength, which is defined to be 1 minus the probability
that none of them contribute:
\begin{eqnarray*}
P\left(e_{ij}^{t}|m_{ij}^{1},\dots,m_{ij}^{k}\right)\!&\!=\!&\!1 -
\prod_{k}\left(1-\mbox{Exp}\left(m_{ij}^{k}|t_{now}\right)\right)\\
\end{eqnarray*}
\vspace{-8mm}

In order to choose a scaling parameter $\lambda$ for the exponential decay,
we measured the average correlation from the sampling method BCR against the time slice ranking and aggregate method for each
Enron employee, for different values of $\lambda$ (see Figure
\ref{fig:lambda-setting}.a).  Note that a $\lambda$ close to 0 corresponds to
`forgetting' a transaction quickly and is highly correlated with the slice
method, while a large $\lambda$ corresponds to 
`remembering' a transaction for a long time, giving it high correlation with the
aggregate method.
In order to
balance between short term change and long term trends we set $\lambda$ to a
`middle ground' with $\lambda=28$ days. This applies to both the Enron and
Facebook datasets. For DBLP, where we evaluate yearly, $\lambda$ is set to $2$
years to keep the ratio between time slice and $\lambda$ consistent
between Facebook, Enron, and DBLP.

In order to choose a value for the $\beta$ parameter in the MLH method,
we measured the average correlation of the BCR from
the MLH method and compared them to the sampling, aggregate, and slice
rankings for different values of $\beta$. We can see in Figure
\ref{fig:aug-24}.b that as long as $\beta$ is non-zero, it has minimal effect
on the correlations. For the experiments reported in this paper, we set $\beta=.3$.  Note that omission of the prior (i.e., $\beta=1$) in
will make the MLH paths similar to the slice paths, with
added paths between vertices which are disjoint in a particular time slice.

The final parameter setting is the number of samples to consider in each of sampling-based measures. Earlier we discussed how we are computing the
joint instances of shortest paths, and that the bound by \cite{sampleprobpaths} does not hold.  
Due of this, we exploit the small size of the Enron dataset and take 10,000 samples; however, 
with the two larger graphs we use a smaller sample size of 200 in order to make the experiments tractable.

\subsection{Method Correlations on Enron Data}
In order to illustrate the differences between the four methods, we
analyze their respective BCR on the Enron data for the time window ending August
14$^{th}$, 2001. 
Figure \ref{fig:aug-24}.c-e shows the correlations of employee
BCR across a pair of methods:
points on the diagonal green
line indicate `perfect' correlation between the rankings of two methods.

Figure \ref{fig:aug-24}.c shows that the MLH method closely matches the sampling
method, with only a few nodes varying from the diagonal.
However, a large number of nodes that the sampling method determines to have high centrality 
are missed by the slice method, due to the slice's inability to see transactions that occurred prior to the evaluation time window.
Additionally, we note that August $14^{th}$, 2001 is relatively late in the Enron timeline, which results in the aggregate method
having little correlation with the sampling method, since the more recent changes are washed out by past transactions in the aggregate approach.

\subsection{Local Trend Analysis}
\subsubsection{Lay and Skilling}

\begin{figure}[t]
\vspace{-4.mm}
\begin{centering}
\subfloat[]{\begin{centering}
\includegraphics[width=0.49\columnwidth,height=2.6cm]{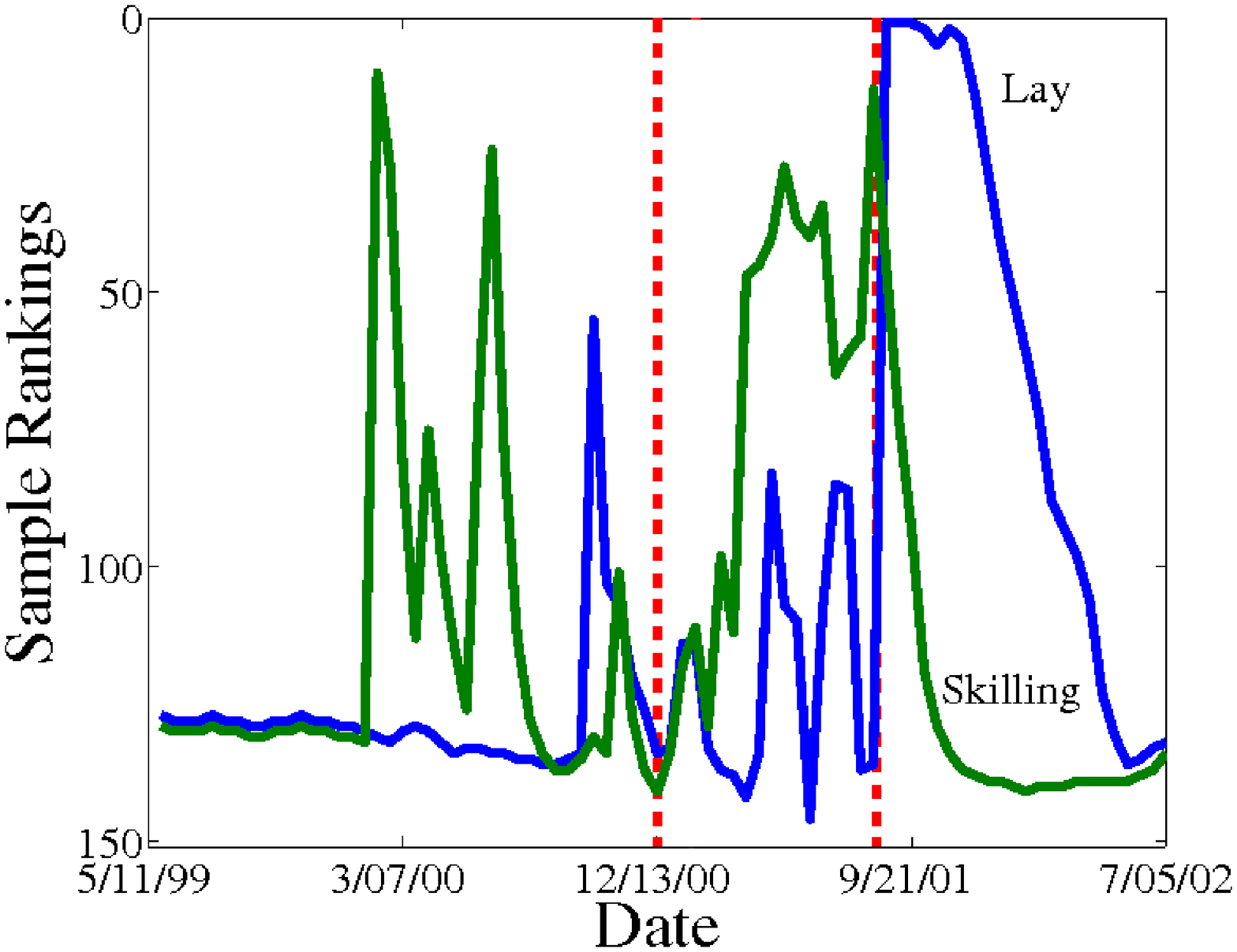}
\par\end{centering}

}\subfloat[]{\begin{centering}
\hspace{-4.mm}
\includegraphics[width=0.49\columnwidth,height=2.6cm]{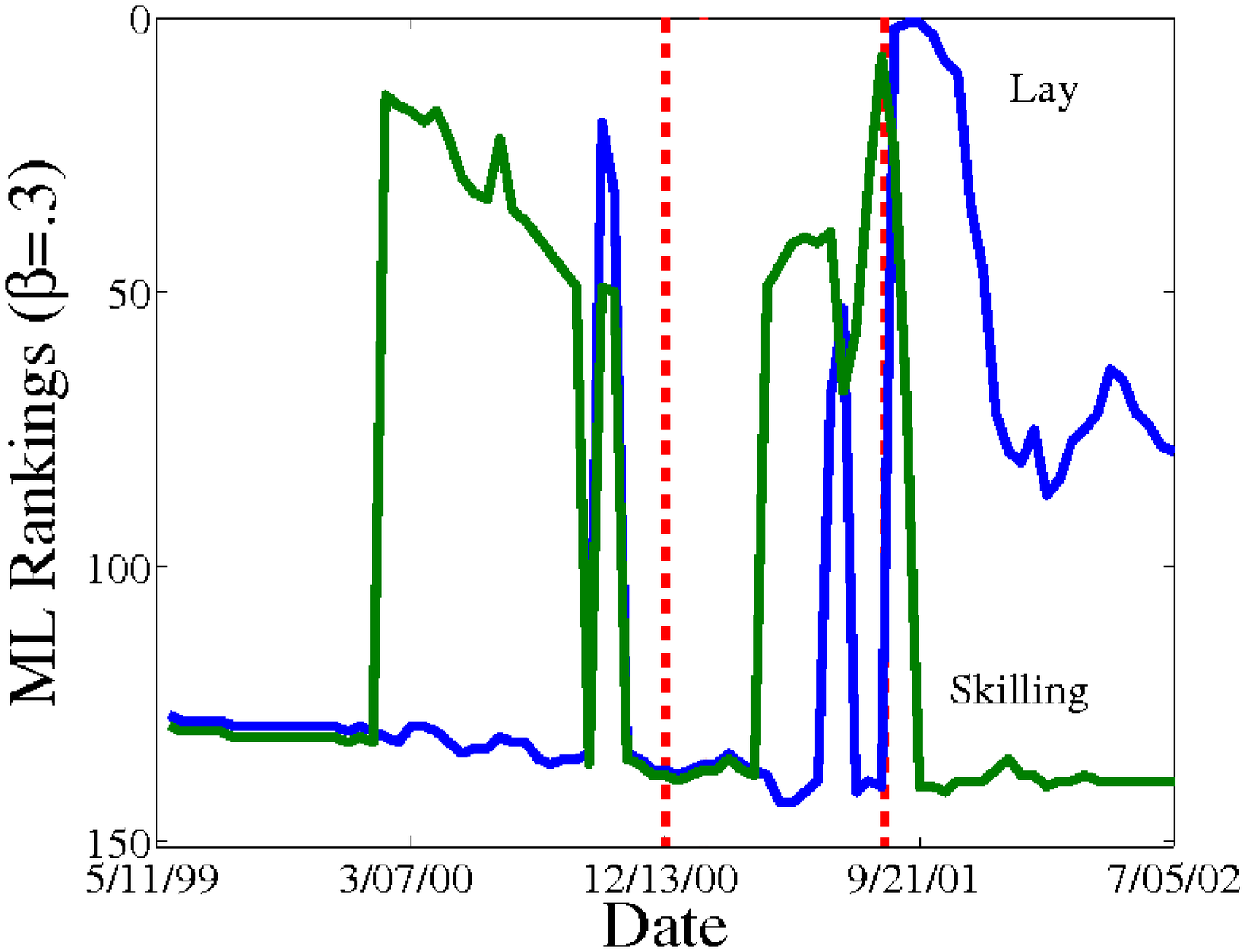}
\par\end{centering}

}
\par\end{centering}

\vspace{-4.mm}
\begin{centering}
\subfloat[]{\begin{centering}
\includegraphics[width=0.49\columnwidth,height=2.6cm]{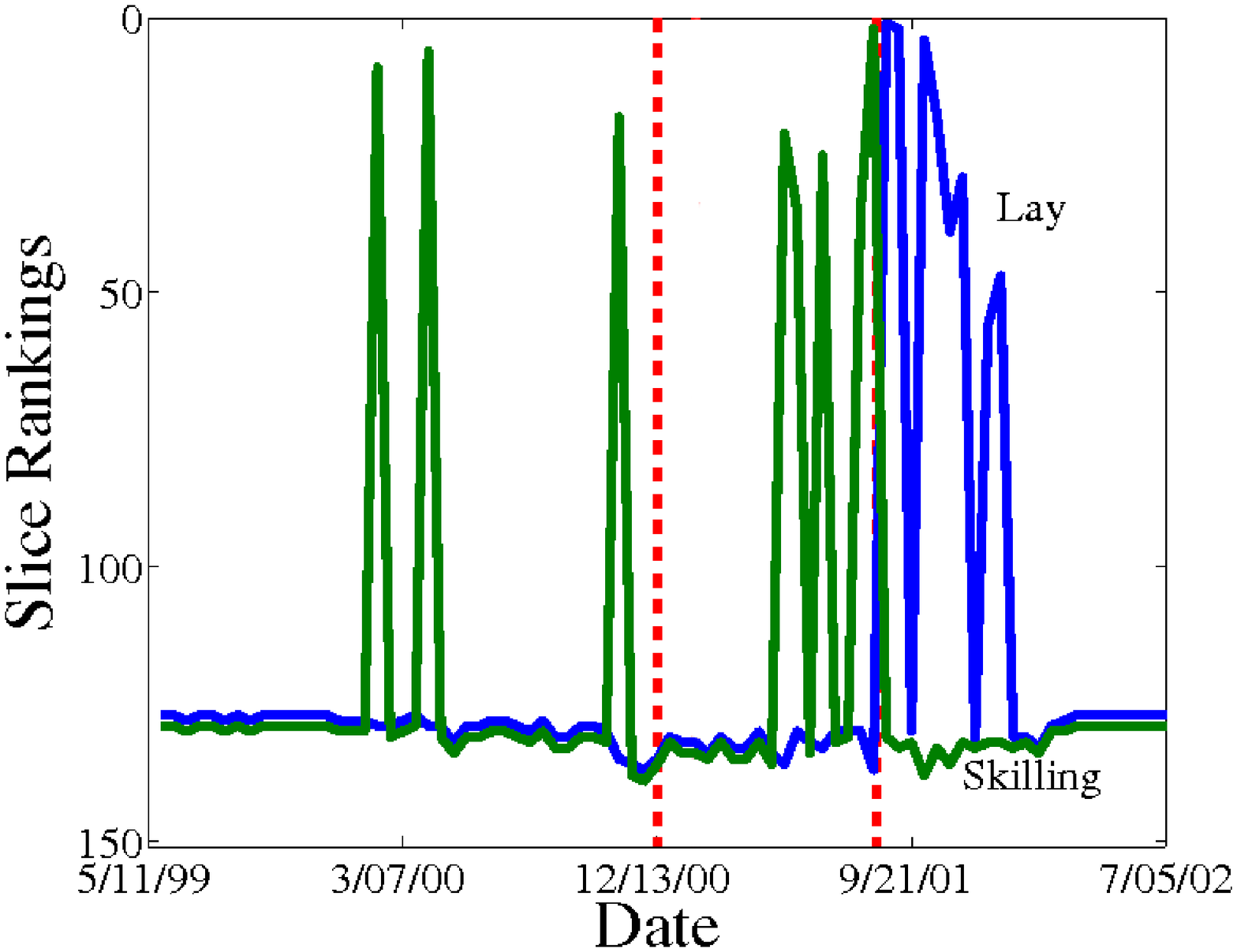}
\par\end{centering}

}\subfloat[]{\begin{centering}
\hspace{-4.mm}
\includegraphics[width=0.49\columnwidth,height=2.6cm]{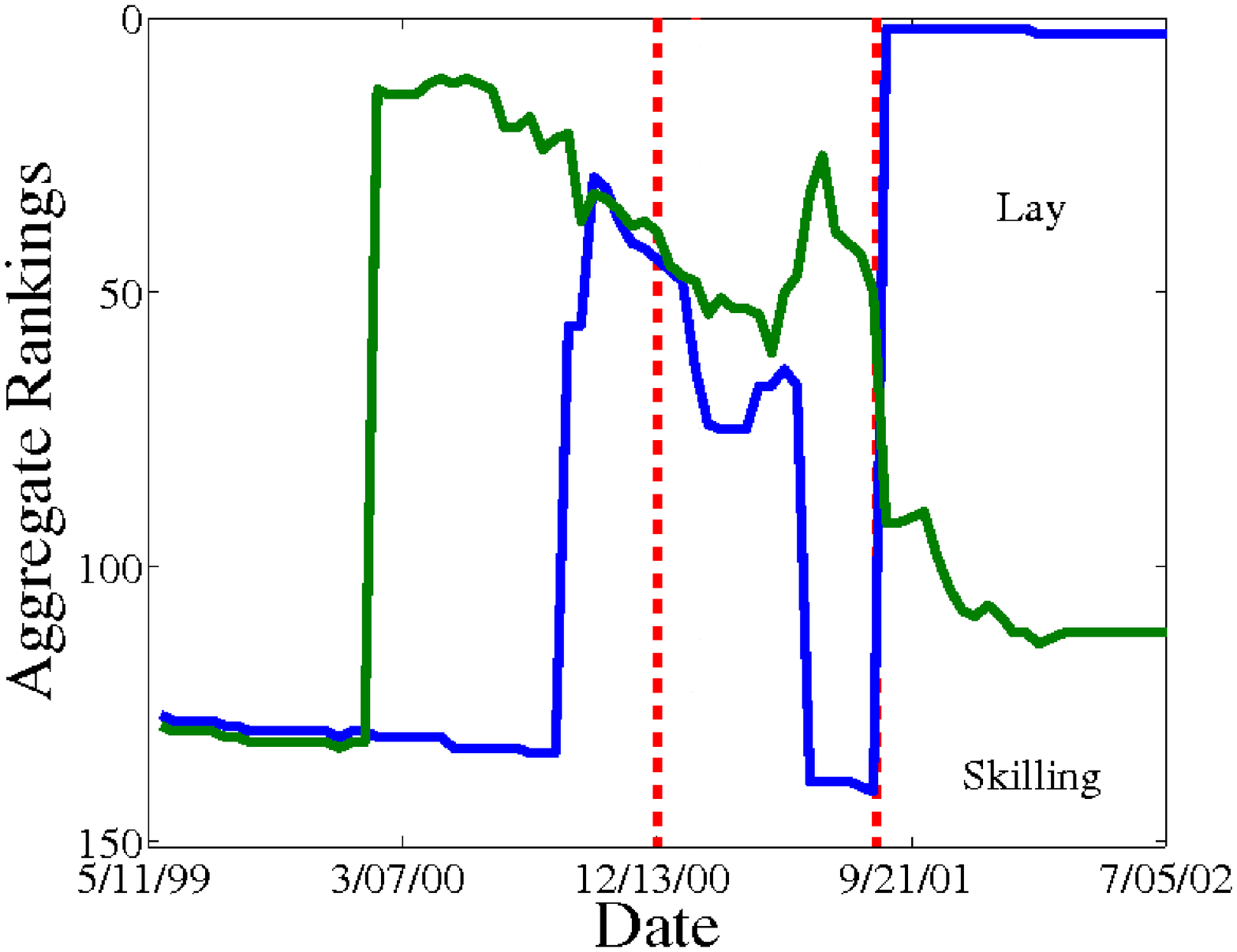}
\par\end{centering}

}
\par\end{centering}

\vspace{-2mm}
\caption{BCR of Lay and Skilling over time. Red lines indicate Skilling's CEO
announcement and resignation.\label{fig:layskilling}}
\vspace{-4.mm}
\end{figure}
Here, we analyze 
two key figures at Enron: Kenneth Lay and
Jeffery Skilling.  These two were central to the Enron scandal---as first Lay, then Skilling, and then Lay again, assumed the position of CEO.
We can analyze the BCR for Lay and Skilling
during these transition periods, as we expect large changes to affect both of them. 

The first event we consider (marked by a vertical red line in Figure~\ref{fig:layskilling}) is {\em December 13$^{th}$ 2000}, when it was announced that Skilling would assume
the CEO position at Enron, with Lay retiring but remaining as a chairman
\cite{EnronTimeline}. In Figure \ref{fig:layskilling}.a,
both the sampling method and the MLH method identify a spike
in BCR for both Lay and Skilling directly before the announcement.
This is not surprising, as presumably Skilling and Lay were informing
the other executives about the transition that was about to be announced.

The time slice method (\ref{fig:layskilling}.c) produces no change in Lay's BCR,
despite his central role in the transition.
Skilling shows a few random spikes of BCR, which illustrates the variance associated with using the time slices. The aggregate model (\ref{fig:layskilling}.d) fails to
reduce Skilling's BCR to the expected levels following the announcement---this is fairly early in time and
we are already seeing the aggregate method's inability to track current events based on its union of all past transactions.
Both the sampling method and the MLH
methods capture this; MLH has him return to an extremely low
centrality, while sampling has fairly low with some variance.

\begin{figure}[t]
\vspace{-4.mm}
\begin{centering}

\subfloat[]{\begin{centering}
\hspace{-4.mm}
\includegraphics[width=0.49\columnwidth,height=2.6cm]{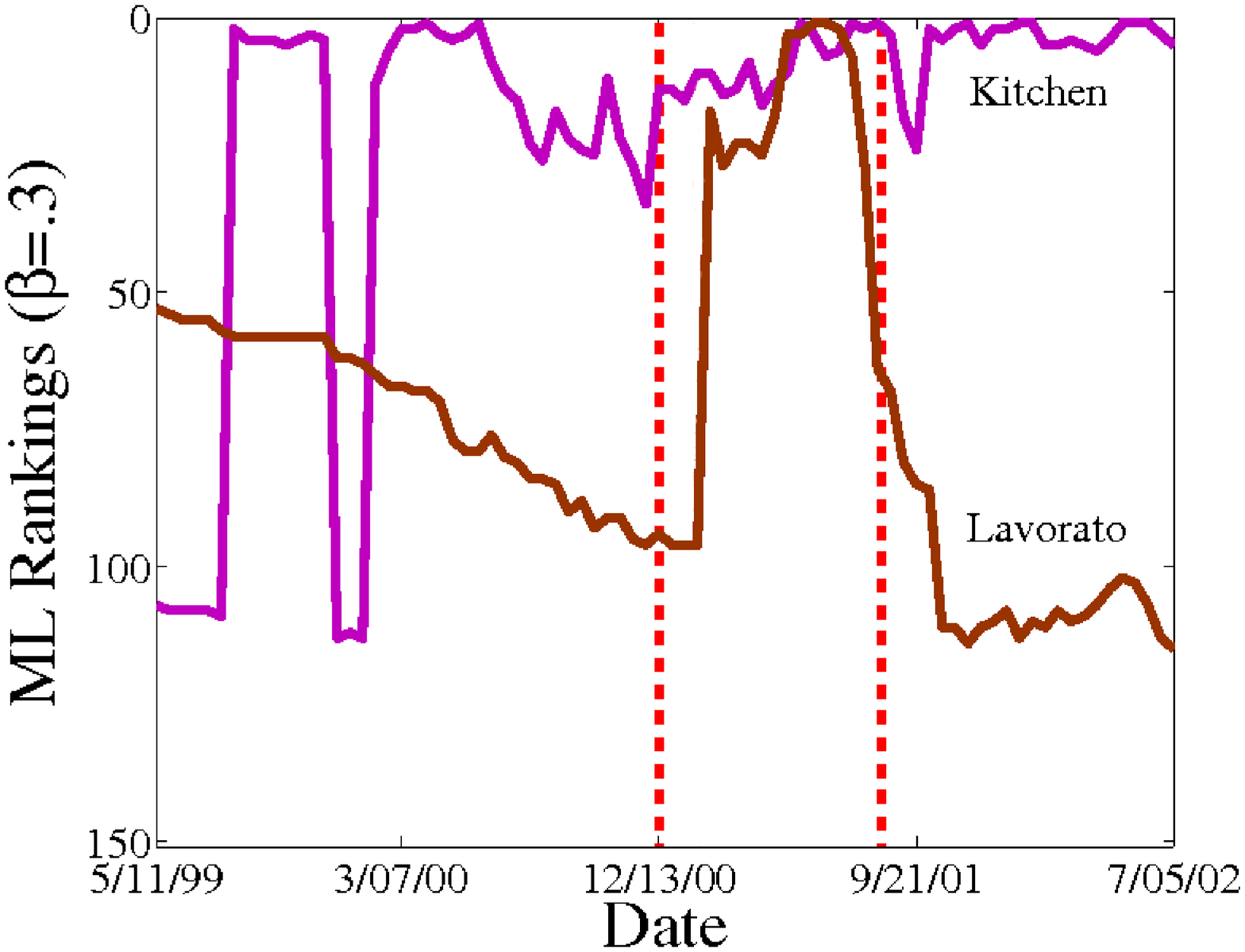}
\par\end{centering}
}\subfloat[]{\begin{centering}
\hspace{-4.mm}
\includegraphics[width=0.49\columnwidth,height=2.6cm]{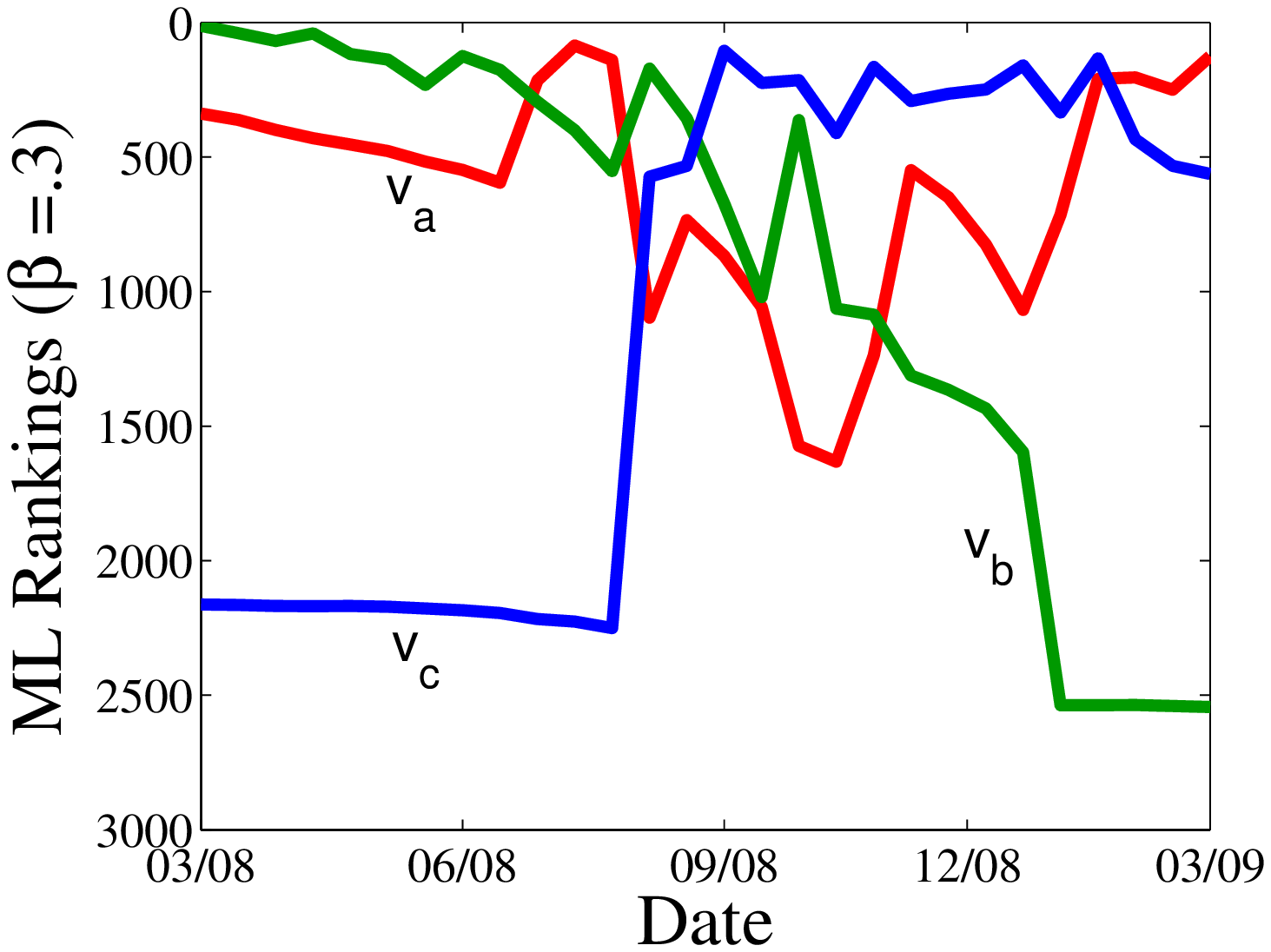}
\par\end{centering}
}
\par\end{centering}

\vspace{-4.mm}

\begin{centering}

\subfloat[]{\begin{centering}
\hspace{-4.mm}
\includegraphics[width=0.49\columnwidth,height=2.6cm]{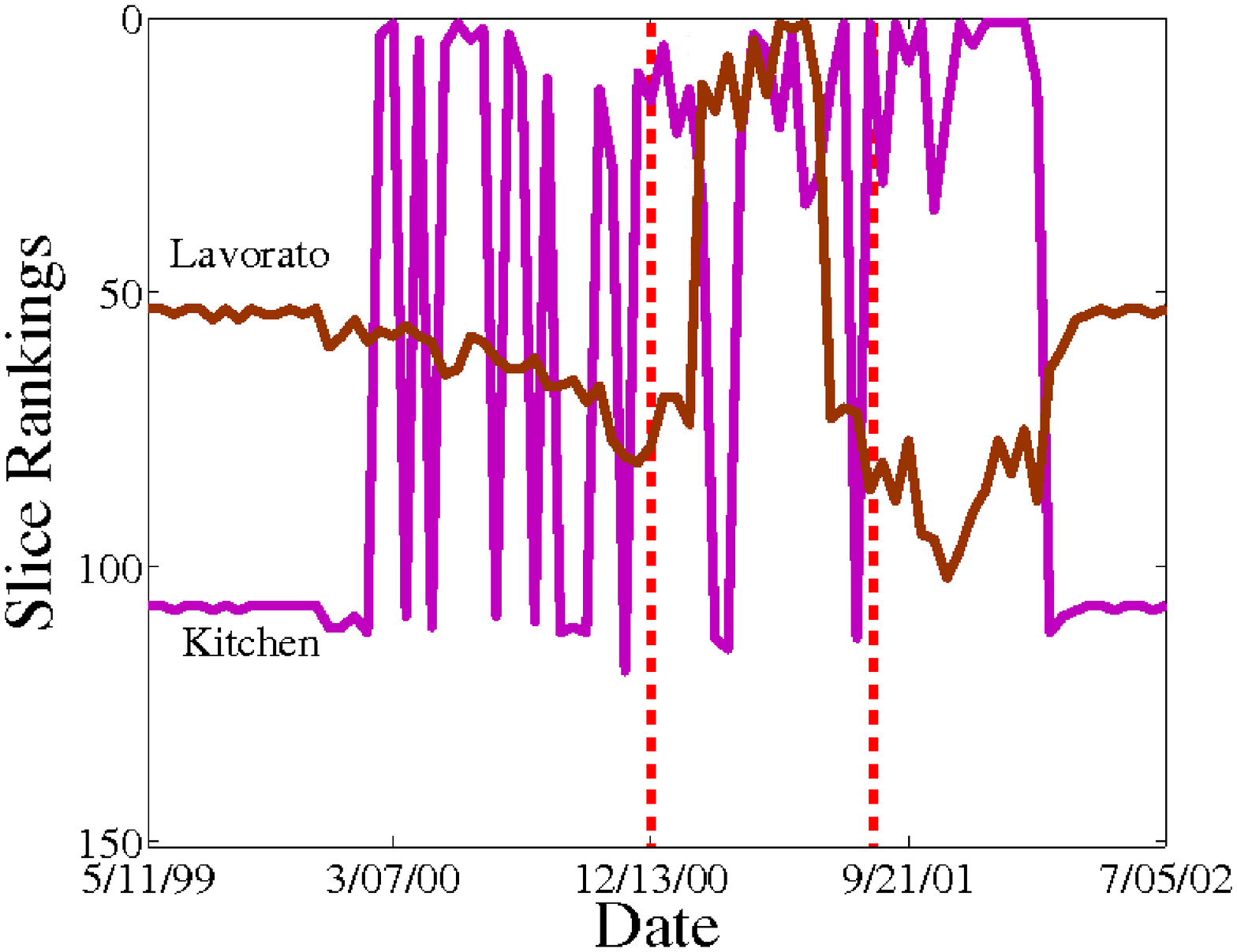}
\par\end{centering}
}\subfloat[]{\begin{centering}
\hspace{-4.mm}
\includegraphics[width=0.49\columnwidth,height=2.6cm]{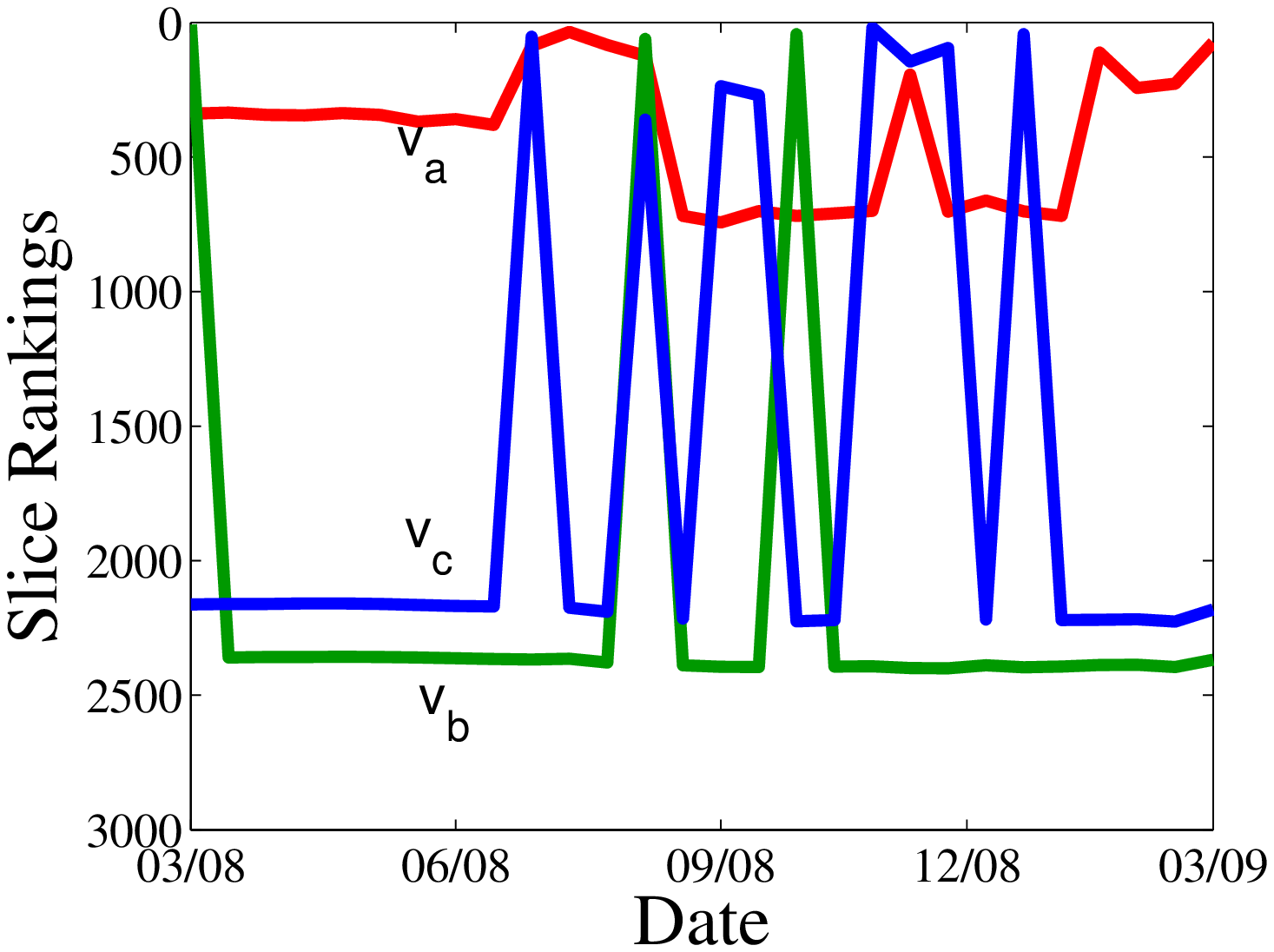}
\par\end{centering}
}

\par\end{centering}

\vspace{-4.mm}

\begin{centering}
\subfloat[]{\begin{centering}
\hspace{-4.mm}
\includegraphics[width=0.49\columnwidth,height=2.6cm]{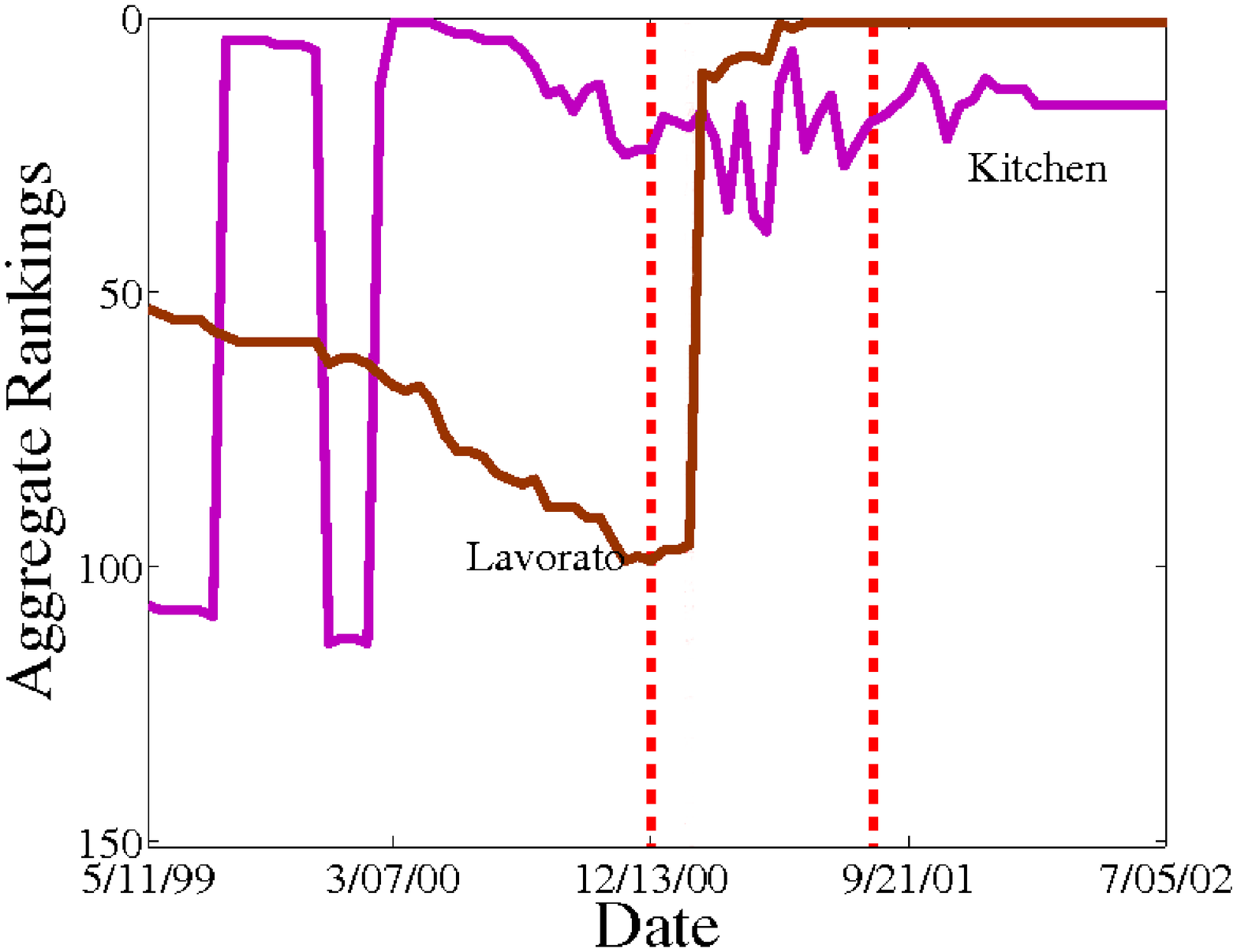}
\par\end{centering}
}\subfloat[]{\begin{centering}
\hspace{-4.mm}
\includegraphics[width=0.49\columnwidth,height=2.6cm]{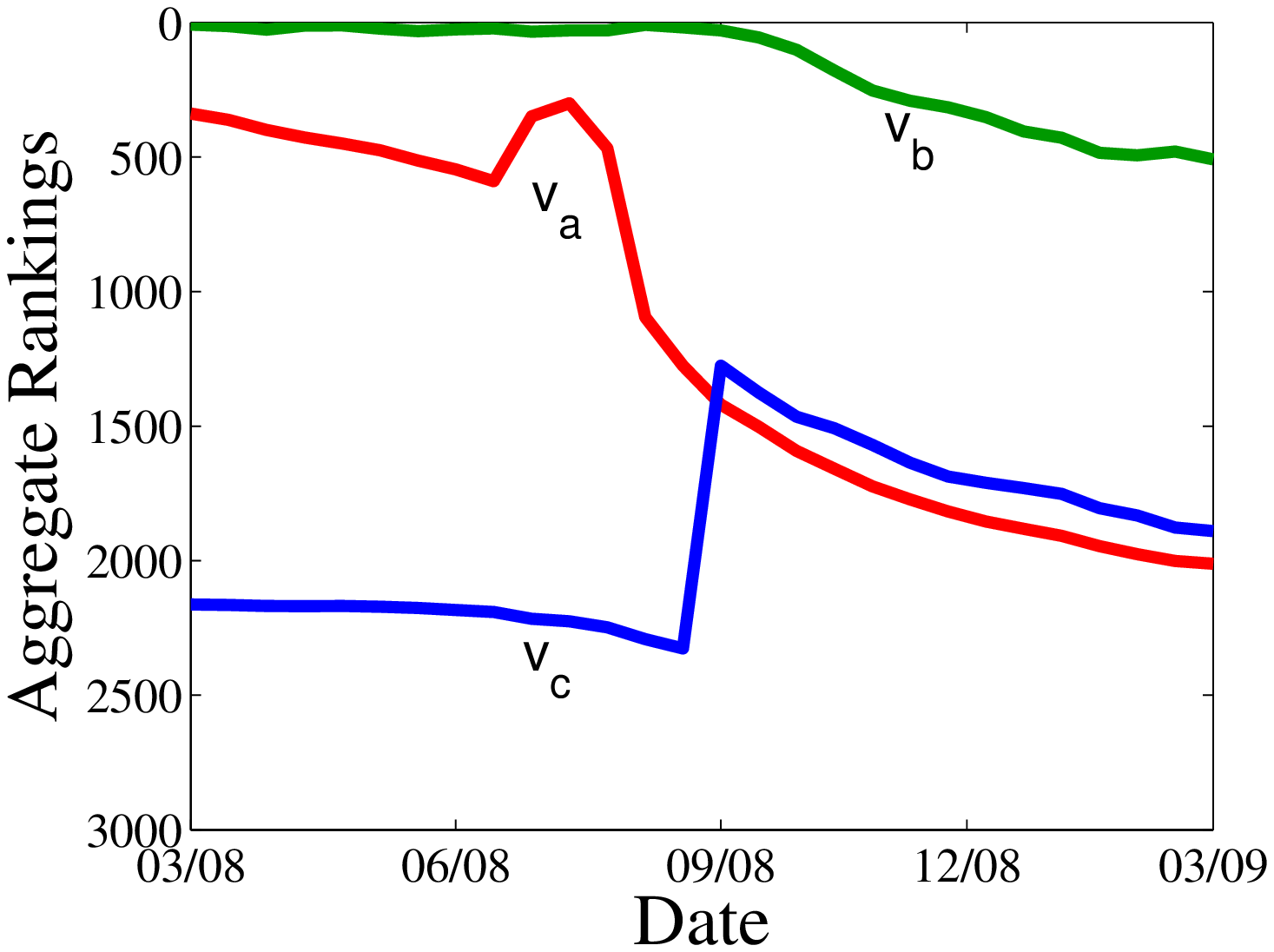}
\par\end{centering}

}

\par\end{centering}

\vspace{-2mm}
\caption{(a,c,e) BCR of Kitchen and Lavorato.\label{fig:kitchenlavorato}  (b,d,e) BCR for 3
nodes in the Purdue Facebook network.}

\vspace{-4.mm}
\end{figure}

The second event we consider (marked by the 2nd vertical red line in Figure~\ref{fig:layskilling}) is {\em August 14$^{th}$ 2001}, when, seven months after initially taking the CEO position, Skilling approached
Lay about resigning~\cite{EnronTimeline}. During the entirety of
Skilling's tenure, we see that Lay has a slight effect on the sample
rankings but is not what would be considered a `central' node. Not
surprisingly, Skilling has a fairly high centrality during his time
as CEO; both the sampling method and MLH method capture this.

Prior to the announcement of Lay's takeover as CEO, the slice method
still had no weight on him, despite his previous involvement with
the first transition. Also, we note that the sampling, MLH,
and slice methods all agree that after Lay's initial spike from the
Skilling resignation, he resumes having a lower centrality, which the aggregate
method misses.    In
general, the sampling method seems to mirror the slice method, albeit with less variance,
but it not as smooth as the MLH method, indicating the utility of considering
most probable paths.

\subsubsection{Kitchen and Lavorato}
Next we analyze Louise Kitchen and John Lavorato, who were executives \cite{enroncleaned} for Enron Americas, which
was the wholesale trading section of Enron \cite{EnronAmericas}. They are notable because of the 
extraordinarily high bonuses they received as Enron was being investigated, and were also
found to have a high temporal betweenness centrality using the method
defined by \cite{tang-tempshortestpaths}. We can see in Figure \ref{fig:kitchenlavorato} (a,c,e)
the rankings of Kitchen and Lavorato, and can see the benefit
of using the probabilistic framework's ability to key in on centralities
at \emph{specific} times, rather than using the temporal
definition \emph{through} time proposed by \cite{tang-tempshortestpaths}. We see that
while Lavorato might have gotten a large bonus, he is \emph{only}
important during Skilling's tenure as CEO; his centrality drops noticeably
otherwise. On the other hand, Kitchen had extremely high rankings
throughout. 

Here, we see that the slice method exhibits high variability, especially with Kitchen,
while the aggregate cannot recognize Lavorato's lack of importance after Skilling's
departure.
The MLH method
is able to smoothly capture Kitchen's centrality, while keeping Lavorato important
solely during Skilling's CEO tenure.

\subsubsection{Facebook Centrality}

Unlike the Enron dataset, the Purdue Facebook dataset does not have well-established ground truths, 
where we can use the known characteristics and behaviors of particular nodes for evaluation. 
However, we can
examine aspects of a few representative nodes to 
illustrate the problems that lie with usage of the aggregate or static methods.
First, we can see from Figure \ref{fig:kitchenlavorato}.d that 
$v_a$ (red) has a consistently high ranking in the slice method, which the MLH method captures (\ref{fig:kitchenlavorato}b).  
However, this person has a declining ranking in the aggregate method, as
the aggregate is unable to capture current events---past information in the
aggregate graph results in many paths that bypass $v_a$, missing this
central node in later timesteps.

The next person we consider is denoted by $v_b$
(green).  In \ref{fig:kitchenlavorato}.d, we can see that the slice method
initially identifies this person as having high centrality, then their BCR bottoms out,
and then peaks a few times again approximately midway through the timeline. The
MLH method also initially identify $v_b$ as central, with a
degradation over time.
In contrast, the aggregate method fails to detect the inactivity
later in the timeframe and continues to give $v_b$ a high centrality ranking
throughout the entire time window.

The final person we consider is denoted by $v_c$ (blue) in Figure \ref{fig:kitchenlavorato}.  We can see in \ref{fig:kitchenlavorato}.d
that the slice method exhibits large variability for $v_c$, but that there are many slices in the middle to end of the timeframe where the node is identified 
as highly central.  The aggregate method is unaware of this activity and ranks $v_c$ at a relatively low level throughout the timeseries.  
In contrast, the MLH method is able to recognize the node's growing importance as time evolves, 
and do so much more smoothly than the slice method (\ref{fig:kitchenlavorato}.d).  In doing so, the MLH method can find instances of high centrality when both discrete methods fail.

\subsection{Global Trend Analysis}

\begin{figure}[t]
\vspace{-4.mm}

\begin{centering}
\subfloat[]{\begin{centering}
\includegraphics[width=0.49\columnwidth,height=2.6cm]{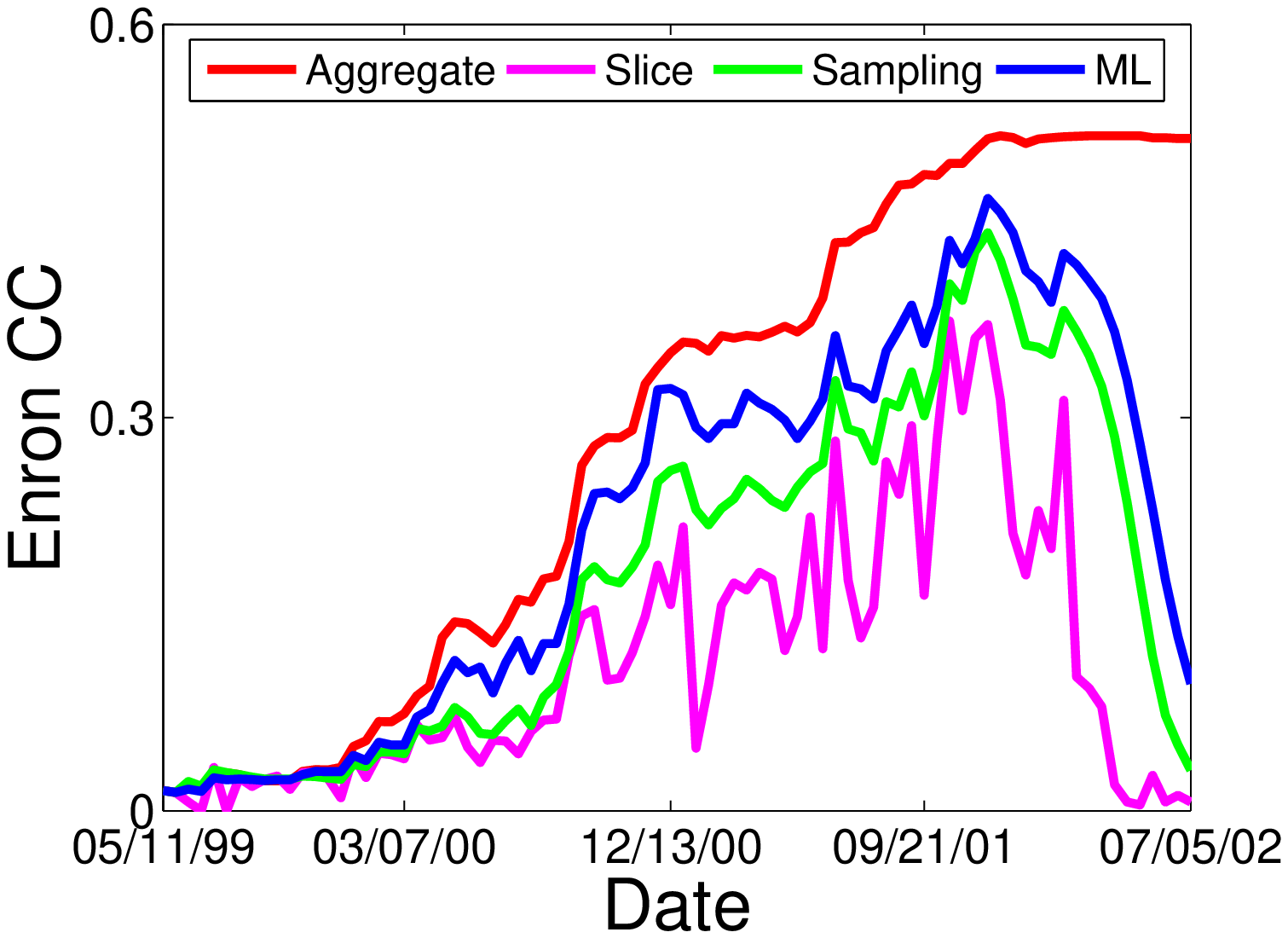}
\par\end{centering}}
\subfloat[]{\begin{centering}
\includegraphics[width=0.49\columnwidth,height=2.6cm]{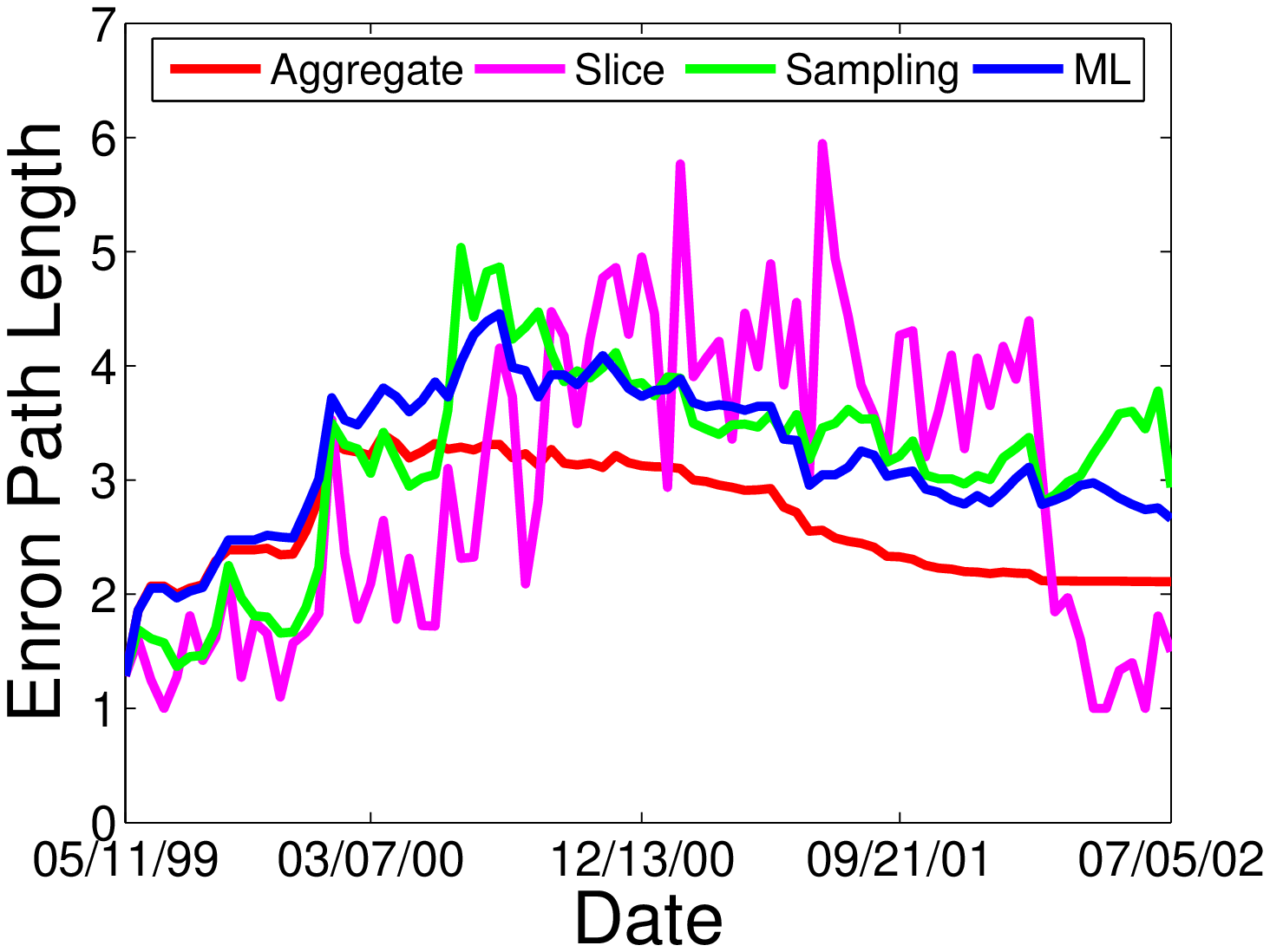}
\par\end{centering}}

\vspace{-4.mm}
\subfloat[]{\begin{centering}
\includegraphics[width=0.49\columnwidth,height=2.6cm]{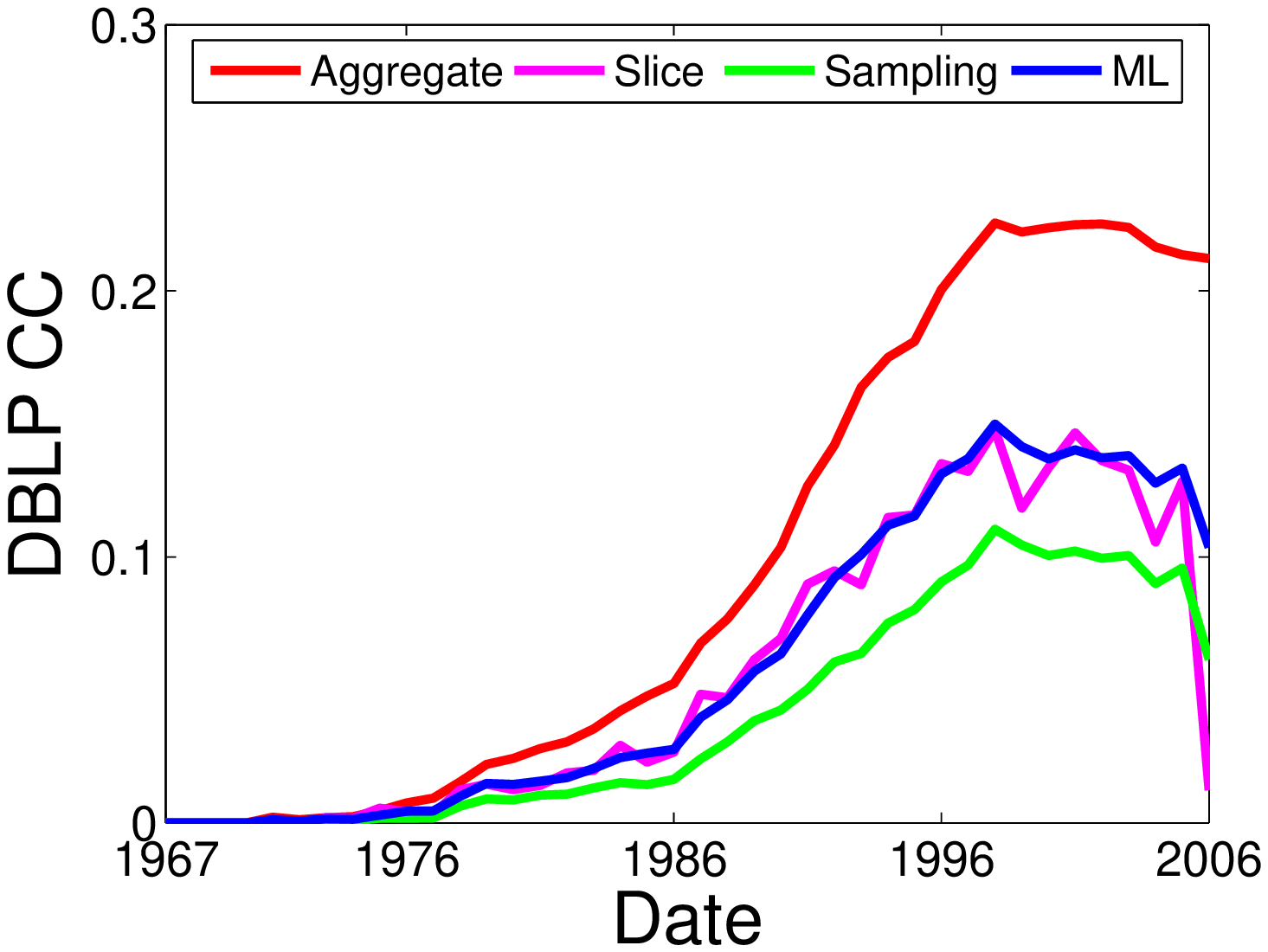}
\par\end{centering}}
\subfloat[]{\begin{centering}
\includegraphics[width=0.49\columnwidth,height=2.6cm]{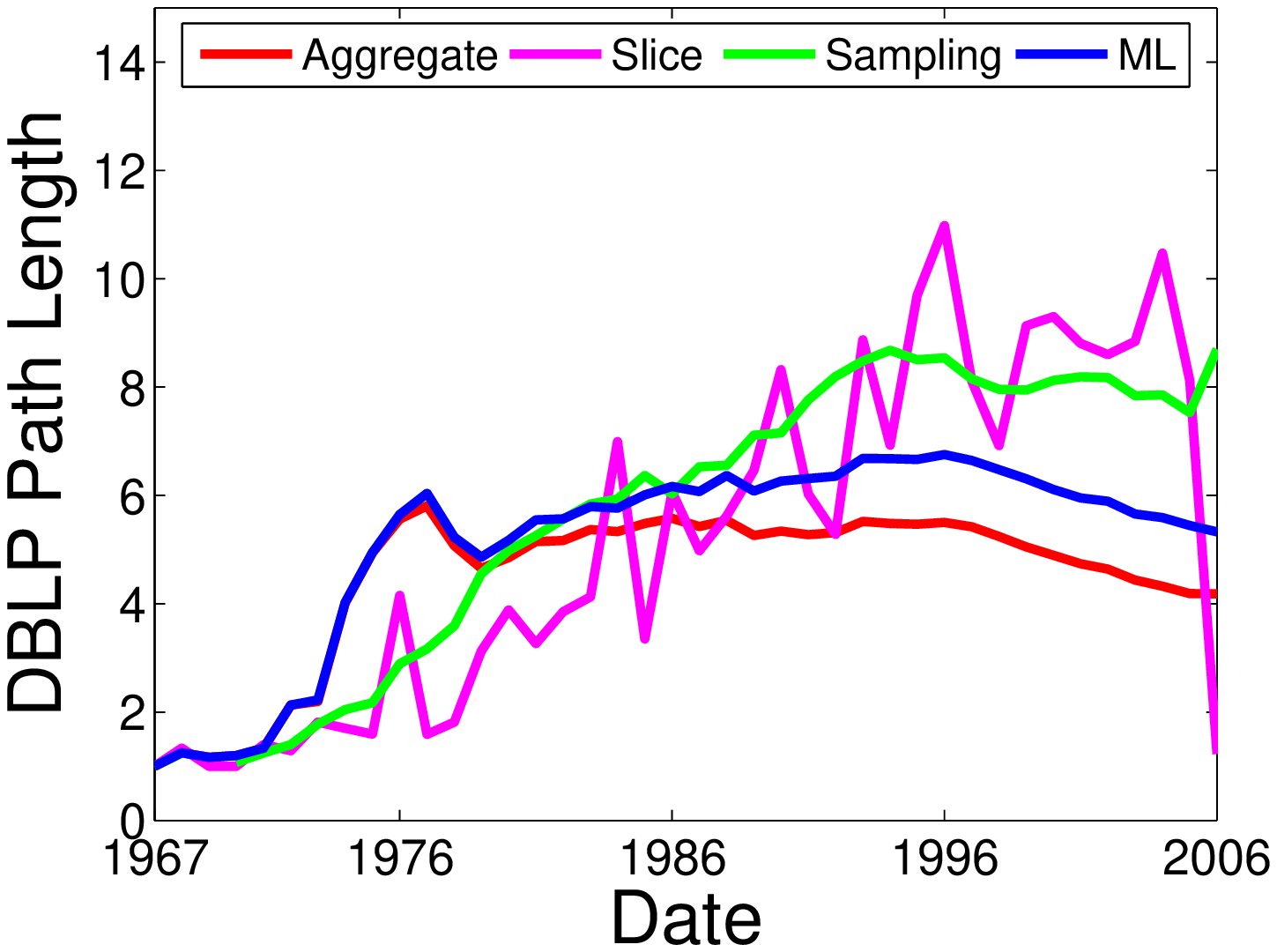}
\par\end{centering}}

\vspace{-4.mm}
\subfloat[]{\begin{centering}
\includegraphics[width=0.49\columnwidth,height=2.6cm]{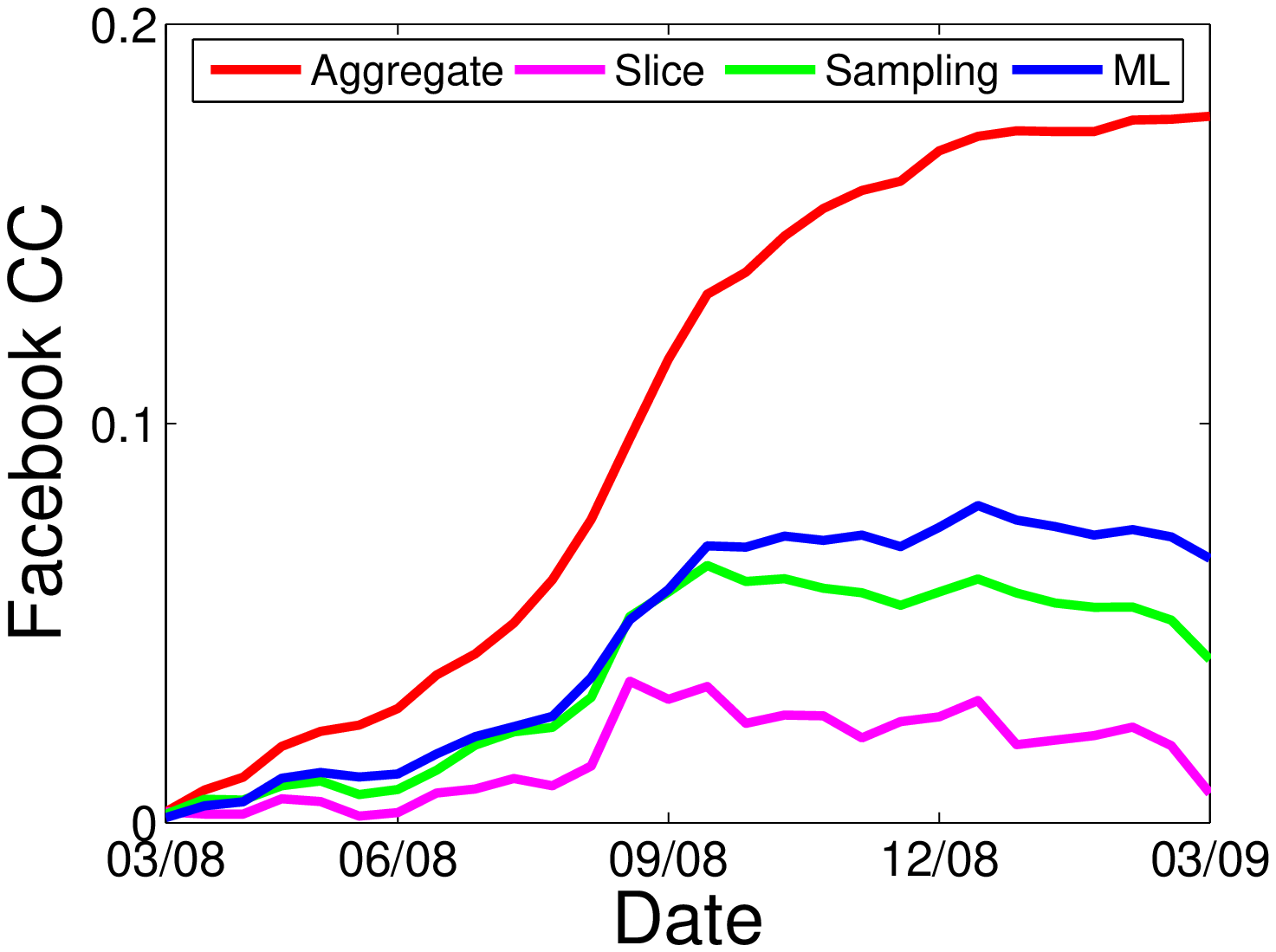}
\par\end{centering}}
\subfloat[]{\begin{centering}
\includegraphics[width=0.49\columnwidth,height=2.6cm]{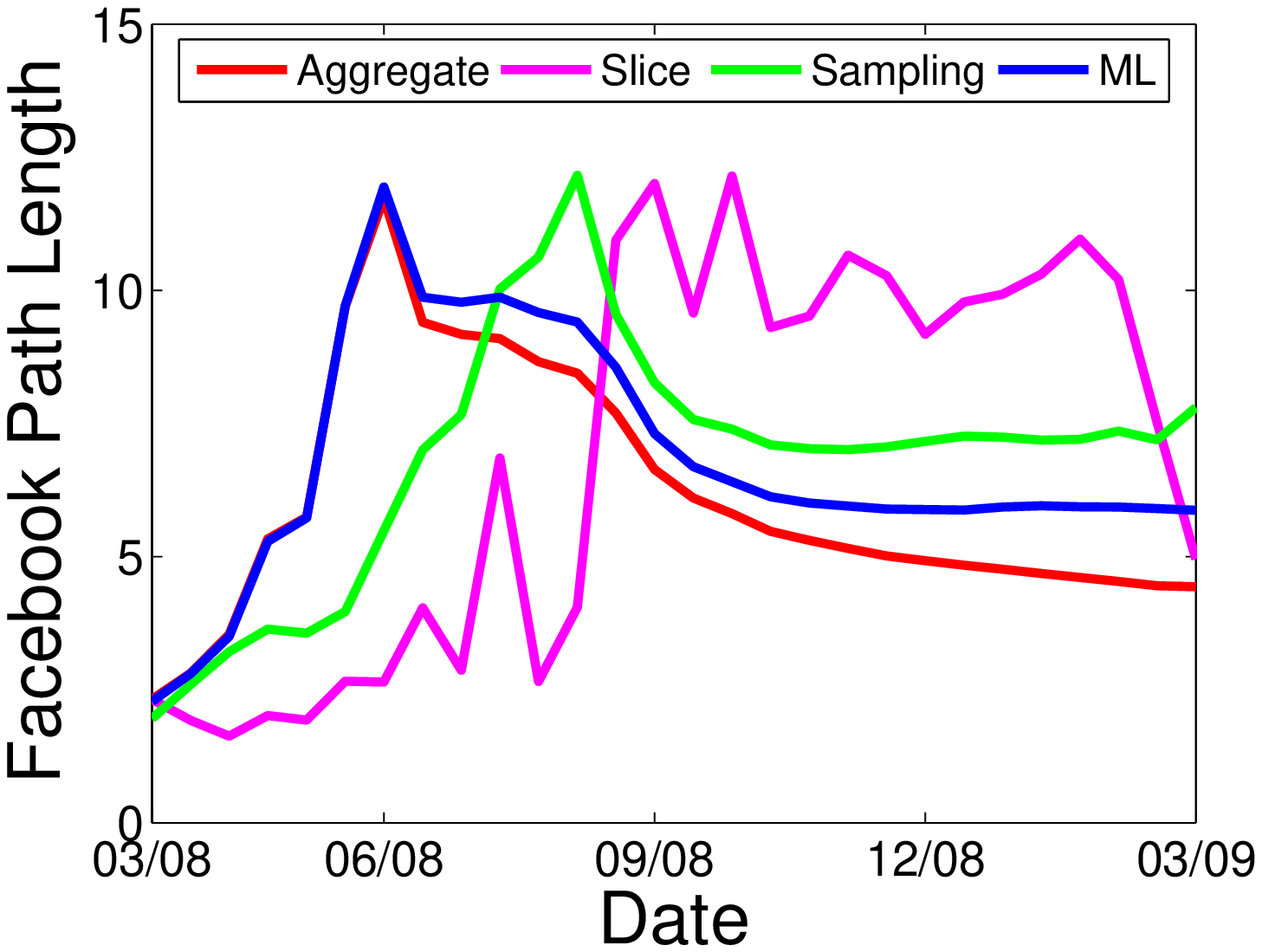}
\par\end{centering}}

\end{centering}
\vspace{-2.mm}
\caption{Average path lengths and clustering coefficients for Enron (a,b), DBLP (c,d) and Facebook (e,f).\label{fig:avg_CC}}
\vspace{-4.mm}

\end{figure}

In Figure~\ref{fig:avg_CC}, we report the average path lengths for the various measures: 
MLH paths, probabilistic shortest paths, the aggregate shortest paths and the slice shortest paths.  
Additionally, we report the average sampled clustering coefficient, the clustering coefficient approximation, 
and the aggregate and slice discrete clustering coefficients.  These are done for 
each of the three datasets through time, and we investigate changes in these global statistics 
to understand what, if any, changes occur with respect to the \emph{small world} network structure of the data 
\cite{clusteringcoefficient}.

In Figures \ref{fig:avg_CC}.a,c,e, we show the clustering coefficients for each
of the three datasets.  The aggregate graph significantly overestimates
the amount of \emph{current} clustering in the graph, while the slice
method is highly variable, especially for Enron.  In general, both
probabilistic measures are in between the two extremes, balancing the effects of recent data
and decreasing the long term effect of past information, with the MLH performing
similarly to the sampled clustering coefficient, and even better on
DBLP, where sampling undercuts the clustering (likely due to small sample
size).

Next, in Figures \ref{fig:avg_CC}.b,d,f, we examine the \emph{shrinking
diameter} of these small world networks \cite{Leskovec05graphsover}.  Here, the aggregate underestimates the
path length at a current point in time.  We can see that the most probable paths
closely follows the sampling results, with both lying between the
slice and aggregate measures while avoiding the variability of the slice method.

\section{Conclusions}

In this paper we investigated the problem of calculating centrality and clustering in
an uncertain network, and analyzed our methods using time evolving networks. We
demonstrated the limitation of using an aggregate graph representation to
capture uncertainty in the network structure due to changes over time, as well as the
limitation of using a slice-based representation due to its extreme variability.  We introduced 
sampling-based measures for average shortest path and betweenness
centrality, as well as measures based on the most probable
paths, which are more intuitive for capturing network flow.
We also outlined exact methods for the computation of most probable paths (and
by extention, most probable betweenness centrality), and incorporated the notion of
transmission probability.  
Additionally, we
developed a probabilistic clustering coefficient and gave a first order Taylor
expansion approximation for computation.

We provided empirical evidence on the Enron, DBLP, and Facebook datasets showing the sampling
and MLH's intuitive centrality rankings for the Enron employees and Facebook members, as well as the global 
properties for all three. The probabilistic centrality and clustering formulations are inherently
smoother than the measures computed from discretized time slices, however they can reason about {\em likely} change in graph structure due to changes over time, unlike the aggregate method, which includes all past information. We see
the MLH formulation is smoother than the sampling method, indicating
that the most probable paths through the graph may be more important to consider
than shortest paths.  Finally, we note that our experiments used a relatively simple estimate of relationship strength  for the edge probabilities in the network. In future work we will investigate alternative formulations of edge uncertainty.

{\small
\section{Acknowledgements}This material is based in part
upon work supported by the Intelligence Advanced Research Projects
Activity (IARPA) via Air Force Research Laboratory contract number
FA8650-10-C-7060.  The U.S. Government is authorized to reproduce and distribute
reprints for Government purposes notwithstanding any copyright
annotation thereon.  Disclaimer:  The views and conclusions contained
herein are those of the authors and should not be interpreted as
necessarily representing the official policies or endorsements, either
expressed or implied, of IARPA, AFRL or the U.S. Government.  Pfeiffer is
supported by a Purdue University Frederick N. Andrews Fellowship.
}
{\small
\bibliographystyle{aaai}
\bibliography{Pfeiffer_Neville}
}
\end{document}